\newcommand{\Agents}{{\mathit Ag}}
\newcommand{\putaway}[1]{}
\newcommand{\node}[2]{\langle #1 \circ #2 \rangle}
\newtheorem{theorem}{Theorem}
 \newtheorem{corollary}[theorem]{Corollary}
 \newtheorem{proposition}[theorem]{Proposition} 
\newtheorem{lemma}[theorem]{Lemma}
\newtheorem{example}{Example}
\newtheorem{definition}{Definition} 
 \newenvironment{proof}{\textbf{Proof.~ }}{\qed}
\newenvironment{proofsk}{\textbf{Sketch of Proof.\ }}{\qed}
\long\def\eatpar#1{%
\ifx#1\par                      
\let\nextmove=\eatpar           
\else
\let\nextmove=#1
\fi
\nextmove
}
\def\qed{\hfill{\qedboxempty}      
  \ifdim\lastskip<\medskipamount \removelastskip\penalty55\medskip\fi}
\def\qedboxempty{\vbox{\hrule\hbox{\vrule\kern3pt
                 \vbox{\kern3pt\kern3pt}\kern3pt\vrule}\hrule}}
\def\qedfull{\hfill{\qedboxfull}   
  \ifdim\lastskip<\medskipamount \removelastskip\penalty55\medskip\fi}
\def\qedboxfull{\vrule height 4pt width 4pt depth 0pt}
\newcommand{{\incolumn}}[1]{\begin{tabular}[c]{c} #1 \end{tabular}}
\newcommand{{\incolumnmath}}[1]{\begin{array}[c]{c} #1 \end{array}}
\gdef\urltilde{\lower 0.6ex\hbox{~}}
\newcommand{\E}{\mathcal{E}} \newcommand{\F}{\mathcal{F}}
\newcommand{\K}{\mathcal{K}} \renewcommand{\L}{\mathcal{L}}
\newcommand{\M}{\mathcal{M}} \newcommand{\N}{\mathcal{N}}
\newcommand{\defterm}[1]{\mbox{\underline{\it\smash{#1}\vphantom{\lower.1ex\hbox
{x}}}}}
\newcommand{\myi}{\emph{(i)}\xspace}
\newcommand{\myii}{\emph{(ii)}\xspace}
\newcommand{\myiii}{\emph{(iii)}\xspace}
\newcommand{\myiv}{\emph{(iv)}\xspace}
\newcommand{\myv}{\emph{(v)}\xspace}
\newcommand{\myvi}{\emph{(vi)}\xspace}
\newcommand{\Nat}{{\rm I\kern-.23em N}}
\newcommand{\commentout}[1]{}
\edef\marginnotetextwidth{\the\textwidth}
\newcommand{\CASE}[1]{\STATE \textbf{case} #1\textbf{:} \begin{ALC@g}}
\newcommand{\ENDCASE}{\end{ALC@g}}
\newcommand{\DEFAULT}{\STATE \textbf{default:} \begin{ALC@g}}
\newcommand{\ENDDEFAULT}{\end{ALC@g}}
\newcommand{\DEFAULTLINE}[1]{\STATE \textbf{default:} }
\def\denot#1{[\![ #1 ]\!]}
\newcommand{\atleast}{\alpha}
\title{A Logic for Global and Local Announcements}
\author{Francesco Belardinelli
\institute{Labortoire IBISC, UEVE}
\institute{and IRIT Toulouse}
\email{belardinelli@ibsc.fr}
\and
Hans van Ditmarsch
       \institute{LORIA -- CNRS}
       \institute{Universit\'e de Lorraine}
       \institute{Vandoeuvre-l\`es-Nancy, France}
       \email{hvd@us.es}
\and 
Wiebe van der Hoek
       \institute{Department of Computing}
       \institute{University of Liverpool}
       \institute{Liverpool, UK}
       \email{wiebe@csc.liv.ac.uk}
}
\begin{document}

\maketitle

\begin{abstract}
In this paper we introduce {\em global and local announcement logic}
(GLAL), a dynamic epistemic logic with two distinct announcement
operators -- $[\phi]^+_A$ and $[\phi]^-_A$ indexed to a subset $A$ of
the set $Ag$ of all agents -- for global and local
announcements respectively. The boundary case $[\phi]^+_{Ag}$
corresponds to the public announcement of $\phi$, as known from the
literature. Unlike standard public announcements, which are {\em
model transformers}, the global and local announcements are {\em
pointed model transformers}. In particular, the update induced by the
announcement may be different in different states of the
model. Therefore, the resulting computations are trees of models,
rather than the typical sequences. A consequence of our semantics is
that modally bisimilar states may be distinguished in our logic. Then, we
provide a stronger notion of bisimilarity and we show that it
preserves modal equivalence in GLAL.  Additionally, we show that GLAL
is strictly more expressive than public announcement logic with common
knowledge.
We prove a wide range of validities for GLAL
involving the interaction between dynamics and knowledge, and show
that the satisfiability problem for GLAL is decidable. We illustrate the formal
machinery by means of detailed epistemic scenarios.
\end{abstract}








\section{Introduction}

In this paper we take inspiration from the state of the art in public
announcement logic (PAL) and introduce a logic for global as well as
local announcements.
Public announcement logic has two
key features. First, announcements are {\em public}, in the sense that
all agents equally observe the new information, and are (commonly)
aware of all equally observing the information. Second,
announcements are {\em global}, that is, although for
truthful public announcements the truth of the announced formula in
the actual state is a precondition, how the new information is
processed does not depend on the actual state but rather on the model
(i.e., public announcements are model transformers).

In the proposed framework we carefully distinguish the two,
independent features of {\em publicity} and {\em globality}, which are
packed together in the announcement operator $[\phi]$, and relax both
of them.  Hence, by weakening publicity, we allow to make
announcements to a proper subset $A$ of the set $Ag$ of all
agents. Then, only the agents in $A$ partake of the new information
contained in the announcement. Further, by weakening globality, we
distinguish between local announcements, whose meaning depends on the
actual state, and global announcements that depend on general features
of the model.

As a result, the language of global and local announcement logic
(GLAL) contains two modalities $[\phi]^+_A$ and $[\phi]^-_A$, for the
global and local announcement of formula $\phi$ respectively, each of
them indexed to a coalition $A$ of agents. We endow GLAL with a
semantics in terms of {\em pointed model} updates that reflects the
intuitions illustrated above. Most interestingly, we are able to
provide a unified account of both global and local announcements, in
which the difference between the two depends on a subtle distinction
in the update mechanism.

{\bf Related Work.} Public announcement logics have witnessed a wealth
of contributions in recent
years \cite{Plaza89,pabe:2004,jfak.book:2011,hvdetal.handbook:2015},
thus making virtually impossible to give an exhaustive account of this
research area.  Here we mention the references most closely related
to the present endeavour, 
as well as some surveys on PAL.
In \cite{gerbrandyetal:1997} a logic of fully private announcements
 was proposed,
 while 
\cite{hvd.thesis:2000,hoekpauly:06a,baltagetal.hpi:2008,kooietal:2011} put forward logics
 of semi-private announcements, which relax the publicity assumption
 of PAL in various directions. Such private or semi-private
 announcements have also been modelled as action
 models \cite{baltagetal:1998}. Differently from our proposal, in
 semi-private announcements the agents that do not observe the
 announcement of $\phi$ learn at least that the other agents have
 learnt whether $\phi$. No such assumption holds in the present
 context. On the other hand, in fully private announcements the other
 agents learn nothing at all about the agents learning $\phi$ (which
 is typically interpreted as the other agents not even being aware of
 the announcement having taken place). This is also different from our
 setting, wherein these other agents learn something about $\phi$
 w.r.t.~the actual state.

Modal logics based on model transformations have also been proposed
in \cite{aucheretal:2009,jfak.sabotage:2005,fervari:2014,arecesetal:2012}. These
accounts share the aspect of locality (dependence of the model
transforming operation on the actual state) that also characterizes
our approach. However, differently from our proposal, these are very
expressive formalisms (typically undecidable, or non-axiomatizable)
that allow to add or remove individual pairs of states from an agent's
accessibility relation; thus operating on a purely semantic level.
On the contrary, in GLAL the model
transformation is determined by the announced formula, so that only
pairs satisfying a condition relative to this formula can be
removed. 
Our logic is therefore decidable, although we have not yet provided an axiomatization.
We also provide results on
bisimulations and the model checking problem. 



%

{\bf Schema of the paper.} In Section~\ref{preliminaries} we introduce
the syntax and semantics of GLAL and provide two examples to
illustrate the formalism.  In Section~\ref{comparison} we compare our
approach to relevant related account appearing in the literature on
dynamic epistemic logic (DEL); while in Sections~\ref{validities}
and \ref{expressivity} we analyse the expressivity of GLAL and prove that, 
differently from PAL, GLAL cannot be reduced to epistemic logic (with
common knowledge) as it is strictly more expressive.  In
Section~\ref{bisimulations} we introduce bisimulation relations that
preserve GLAL formulas; whereas in Section~\ref{modelcheck} we present
results on the model checking and satisfiability problems. We discuss
the meaning and relevance of these results in Section~\ref{conc},
where we also point to directions for future research.

\section{A Unified Framework for Global and Local Announcements} \label{preliminaries}


In this section we introduce the syntax and semantics of global and
local announcement logic.
%
We warn that the term `announcement' is used here with a different
meaning with respect to public announcement logic. As discussed in the
introduction, the announcements of PAL appear here as global
announcements to all agents. Hence, our notion of announcement is more
general as it also covers local announcements and announcement to only
a selected subset of all agents.  The distinction will be clear once
the appropriate semantics is introduced.\\


\textbf{Syntax.}
In the rest of the paper we assume a set $AP$ of atomic propositions
(or atoms), and a finite set $Ag$ of indexes for agents. 
%
\begin{definition}[GLAL] \label{pAL}
The formulas in GLAL are defined in BNF as follows, 
for $p \in AP$ and $A \subseteq Ag$:
\begin{eqnarray*}
\psi & ::= & p \mid \neg \psi \mid \psi \land \psi \mid C_A \psi \mid
     [\psi]^+_A \psi \mid [\psi]^-_A \psi
\end{eqnarray*}
\end{definition}

The language $\L_{glal}$ of GLAL contains epistemic formulas $C_A
\phi$, for coalition $A \subseteq Ag$ of agents, which intuitively
says that ``$\phi$ is common knowledge in coalition $A$'', as
customary. Moreover, we have global announcement formulas
$[\psi]^+_A \phi$, whose reading is that ``after globally announcing
$\psi$ to the agents in $A$, $\phi$ is true'', as well as local
announcements $[\psi]^-_A \phi$, whose meaning is that
``after locally announcing $\psi$ to the agents in $A$, $\phi$ is
true''.  We will illustrate and discuss, using our semantics, the
different interpretations of operators $[\psi]^+_A$ and $[\psi]^-_A$.
The individual knowledge formulas $K_a \phi$ can be defined as $C_{\{
a \}} \phi$ as standard, as well as symbols $\bot$, $\top$,
connectives $\lor$, $\to$, and dual operators $M_a$, $\langle \psi
\rangle^-_A$, and $\langle \psi \rangle^+_A$.  Also, the ``everybody
knows'' formula $E_A \phi$ is a shorthand for $\bigwedge_{a \in A} K_a
\phi$. 
We omit subscript $A$ from $E_A \phi$ and $C_A \phi$ whenever $A$ is
the grand coalition $Ag$, and simply write $E \phi$, $C
\phi$. Finally, we write $Kw_a \phi$ as a shorthand for $K_a \phi \lor K_a \neg \phi$.

Global and local announcement logic extends a number of well-known
formalisms.
The language $\L_{pal^+}$ obtained by Def.~\ref{pAL} without clause
$[\psi]^-_A \psi$ is (an extension of) public announcement logic; the
language $\L_{el}$ without clause $[\psi]^+_A \psi$ as well is
epistemic logic (with common knowledge), and language $\L_{pl}$
without clause $C_A \psi$ as well is propositional logic.
These (syntactic) language inclusions can be summarised as follows:
\begin{equation*}
\L_{pl}  \ \ \subseteq   \ \  \L_{el}   \ \ \subseteq   \ \ \L_{pal^+}   \ \ \subseteq   \ \  \L_{glal}
\end{equation*}

Hereafter, when we refer to ``epistemic logic'', we mean language
$\L_{el}$ including common knowledge.\\

\textbf{Semantics.} \label{semantics}
We interpret formulas in GLAL on multi-modal Kripke frames and models,
and then define appropriate update mechanisms for global and local
announcements.
\begin{definition}[Frame]\label{def:kf}
A Kripke frame is a tuple $\F = \langle W,  \{R_a\}_{a \in Ag} \rangle$ where
\begin{itemize}
\item $W$ is a set of {\em possible worlds};
\item for
  every agent index $a \in Ag$, $R_a \subseteq 2^{W \times W}$ is an {\em
    equivalence relation} on $W$.
\end{itemize}
\end{definition}

As customary in epistemic logic \cite{HoekM95,BlackburndRV01}, for
every agent $a \in Ag$, $R_a$ is the corresponding {\em
indistinguishability} relation between worlds in $W$.  In the
following, for a coalition $A
\subseteq Ag$, we consider also the
reflexive and transitive closure $R^C_A = (\bigcup_{a \in A} R_a)^*$
of the union of indistinguishability relations, for the interpretation
of common knowledge.  Then, for coalition $A$ and $w \in
W$, we set
$R^C_A(w) = \{w' \mid R^C_A(w,w') \}$.  Notice that, each $R_a$ being
an equivalence relation, each 
$R^C_A(w)$ is the equivalence class of $w \in W$.
Moreover, whenever $A$ is a singleton $\{a\}$, we
obtain that $R^C_A(w) =  \{w' \mid R_{\{a\}}(w,w') \} = R_a(w)$, and $R^C_A$ can be
represented as
the set $\E_A = \{ R^C_A(w) \mid w \in W\}$ of its equivalence classes.

To assign a meaning to formulas in GLAL we introduce {\em assignments}
as functions $V: AP \to 2^{W}$.
A {\em (Kripke) model} is then defined as a pair $\M = \langle
\F, V \rangle$.  
%
%
\begin{definition}[Satisfaction] \label{def:semantics}
We inductively define the {\em satisfaction set}
$[\![\varphi]\!]_{\M} \subseteq W$ of formula $\varphi$ in model $\M
= \langle \F, V \rangle$
as
follows:

{\small
\begin{tabbing}
$[\![p]\!]_{\M}$ \ \ \ \ \ \ \   \= =  \ \ \ \= $V(p)$\\
$[\![\neg \psi]\!]_{\M}$ \> = \> $W \setminus [\![\psi]\!]_{\M}$\\
$[\![\psi \land \psi']\!]_{\M}$ \> = \> $[\![\psi]\!]_{\M} \cap [\![\psi']\!]_{\M}$\\
$[\![C_A \psi]\!]_{\M}$ \> = \> $\{ w \in W \mid \text{for all }  w' \in R^C_A(w),  w' \in [\![\psi]\!]_{\M} \}$\\
$[\![[\psi]^-_A \psi']\!]_{\M}$ \> = \> $\{ w \in W \mid \text{if }  w \in [\![\psi]\!]_{\M} \text{ then }  w \in [\![\psi']\!]_{\M^-_{(w,\psi,A)}} \}$\\
$[\![[\psi]^+_A \psi']\!]_{\M}$ \> = \> $\{ w \in W \mid \text{if }  w \in [\![\psi]\!]_{\M} \text{ then }  w \in [\![\psi']\!]_{\M^+_{(w,\psi,A)}} \}$
\end{tabbing}
}
where {\em refinements} $\M^-_{(w,\psi,A)} = \langle W^-, \{ R^-_a
\}_{a \in Ag}, V^- \rangle$ and  $\M^+_{(w,\psi,A)} = \langle W^+, \{ R^+_a
\}_{a \in Ag}, V^+ \rangle$ of model $\M$ w.r.t.~world $w$, formula
$\psi$, and coalition $A$, are defined as
\begin{itemize}
\item $W^- = W^+ = W$ and $V^- = V^+ = V$; 
\item for every agent $b \notin A$, $R^-_b = R^+_b = R_b$;
  while for $a \in A$,
\[
R^-_a(v) =  
  \begin{cases} R_a(v) \cap [\![\psi]\!]_{\M} & \mbox{if } v \in R_a(w) \cap [\![\psi]\!]_{\M}\\
R_a(v) \cap [\![\neg \psi]\!]_{\M} & \mbox{if } v \in R_a(w) \cap [\![\neg \psi]\!]_{\M}\\
  R_a(v) & \mbox{otherwise}
  \end{cases}
  \]
\[
R^+_a(v) =  
  \begin{cases} R_a(v) \cap [\![\psi]\!]_{\M} & \mbox{if } v \in R^C_A(w) \cap [\![\psi]\!]_{\M}\\
R_a(v) \cap [\![\neg \psi]\!]_{\M} & \mbox{if } v \in R^C_A(w) \cap [\![\neg \psi]\!]_{\M}\\
  R_a(v) & \mbox{otherwise}
  \end{cases}
  \]
\end{itemize}
\end{definition}

By Def.~\ref{def:semantics} the refinement $\M^-_{(w,\psi,A)}$ only
affects worlds that are accessible by each agent in $A$ separately,
while $\M^+_{(w,\psi,A)}$ involves all worlds reachable through
relation $R^C_A$.  In all these worlds the accessibility relation is
updated according to whether the world in question satisfies the
announcement, that is, the announcement refines the equivalence class
of each such world.  In Example~\ref{ex1} and \ref{ex:bisim} below we
illustrate the differences between the two types of refinement.
Notice that in the case of single agents, the refinements
$\M^-_{(w,\psi,a)}$ and $\M^+_{(w,\psi,a)}$ coincide, hence we omit
superscripts $-$ and $+$ from single-agent refinements and modalities.
Indeed, globally or locally announcing a fact to a single agent is
tantamount, as she is the only one to witness the announcement.
In such a case, model refinement
$\M_{(w,\psi,a)}$
can be interpreted as ``in $R_a(w)$ agent $a$
learns whether $\psi$''.  As a consequence, formula $[\psi]_a
\phi$ 
then becomes: if $\psi$ holds and $a$ learns whether $\psi$, then
$\phi$ holds as well.  Also, the updated set
$\E'_a$ of equivalence classes in $\M_{(w,\psi,a)}$ can be shown to be equal to $(\E_a
\setminus \{R_a(w) \}) \cup \{R_a(w) \cap \denot{\psi}, R_a(w) \cap
\denot{\neg\psi}\}$.


We introduce standard notions of truth and validity.
A formula $\phi$ is {\em satisfied} at $w$, or $(\M, w) \models
\phi$, iff $w \in \denot{\phi}_{\M}$;  
$\phi$ is {\em true} at $w$, or $(\F, w) \models
\phi$, iff $(\langle \F, V \rangle, w) \models \phi$ for every
assignment $V$; $\phi$ is {\em valid} in a frame $\F$, or $\F \models
\phi$, iff $(\F, w) \models \phi$ for every world $w$ in $\F$.
We often omit the subscript
$\M$ from the satisfaction set $[\![\psi]\!]_{\M}$ whenever clear by
the context.

We now state that our model refinements are well-defined as both $R^-_{a}$
and $R^+_{a}$ are actually equivalence relations.
\begin{proposition} \label{prop11}
Let $\M$ be a model with refinements $\M^-_{(w,\psi,A)}$ and
$\M^+_{(w,\psi,A)}$. For every agent $a \in Ag$, if $R_a$ is an
equivalence relation, then also $R^-_{a}$ and $R^+_{a}$ are.
\end{proposition}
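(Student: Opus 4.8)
The plan is to check that each updated relation is reflexive, symmetric and transitive, organised around the following criterion: a relation $R$ presented by its image sets $R(v)=\{v'\mid R(v,v')\}$ is an equivalence relation iff (i) $v\in R(v)$ for all $v$, and (ii) $v'\in R(v)$ implies $R(v')=R(v)$. These two conditions immediately yield symmetry and transitivity, and together they amount to saying that $\{R(v)\mid v\in W\}$ partitions $W$. I would apply this criterion directly to the image sets $R^-_a(v)$ and $R^+_a(v)$ of Definition~\ref{def:semantics}. Two facts are used throughout: since $R_a$ is an equivalence relation, $v'\in R_a(v)$ gives $R_a(v')=R_a(v)$; and $\denot{\psi}$, $\denot{\neg\psi}$ partition $W$.

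The case $b\notin A$ is immediate, as $R^-_b=R^+_b=R_b$ inherits the properties of $R_b$, so only $a\in A$ needs work. For the local refinement $R^-_a$, observe that only the block $R_a(w)$ is touched: for $v\in R_a(w)$ we have $R_a(v)=R_a(w)$, hence $R^-_a(v)$ is $R_a(w)\cap\denot{\psi}$ or $R_a(w)\cap\denot{\neg\psi}$ depending on which of $\psi,\neg\psi$ holds at $v$, while $R^-_a(v)=R_a(v)$ for $v\notin R_a(w)$. Thus $R^-_a$ merely splits the single block $R_a(w)$ along $\psi$ and leaves all other $R_a$-blocks untouched; condition (i) is clear, and (ii) follows from a short case distinction confirming that any $v'\in R^-_a(v)$ falls under the same clause of the definition as $v$.

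The global refinement $R^+_a$ is the crux, and I expect it to be the main obstacle, because its triggering condition involves the common-knowledge class $R^C_A(w)$, which may cut across several $R_a$-blocks. The key lemma I would isolate is that $R^C_A(w)$ is $R_a$-closed for each $a\in A$, i.e.\ a union of whole $R_a$-blocks. This follows from $R_a\subseteq R^C_A$ and the transitivity of $R^C_A=(\bigcup_{b\in A}R_b)^*$: if $v\in R^C_A(w)$ and $v'\in R_a(v)$, then $R^C_A(w,v)$ and $R_a(v,v')$ yield $R^C_A(w,v')$, so $R_a(v)\subseteq R^C_A(w)$. Granting this, every $R_a$-block lies either entirely inside $R^C_A(w)$—in which case it is split into its $\psi$- and $\neg\psi$-parts, consistently for all its members—or entirely outside, in which case it is preserved. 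Hence the image sets of $R^+_a$ again form a partition, and (ii) is verified by the same case analysis as before, now using $R_a$-closedness to ensure that $v'\in R^+_a(v)$ forces $v'$ onto the same side of $R^C_A(w)$ and the same $\psi$-side as $v$.
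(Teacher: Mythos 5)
Your proof is correct, and it takes a somewhat different route from the paper's. The paper verifies reflexivity, symmetry and transitivity of $R^-_a$ directly at the level of pairs (e.g.\ symmetry and transitivity follow because related worlds must agree on $\psi$ when the updated clause applies), and then dismisses $R^+_a$ with ``the proof is similar''. You instead recast the question as showing that the image sets $R^\pm_a(v)$ form a partition of $W$, via the criterion $v\in R(v)$ and $v'\in R(v)\Rightarrow R(v')=R(v)$; this is equivalent but packages the three axioms into one case analysis. The genuine added value of your version is the explicitly isolated lemma that $R^C_A(w)$ is $R_a$-closed for each $a\in A$ (from $R_a\subseteq R^C_A$ and transitivity of the reflexive--transitive closure), so that every $R_a$-block lies wholly inside or wholly outside the triggering region and is therefore split consistently along $\denot{\psi}$ versus $\denot{\neg\psi}$. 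That is precisely the point where the $R^+_a$ case differs from the $R^-_a$ case (where the triggering region $R_a(w)$ is itself a single block), and it is the step the paper's ``similar'' glosses over; your treatment makes the argument for the global refinement fully explicit rather than leaving it to the reader.
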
 

We refer to Section~\ref{comparison} for a thorough comparison with related
approaches to public and private announcements in DEL.  Here, we 
illustrate the difference between global and local announcements by
means of two examples drawn from the literature on multi-agent systems
and dynamic logic \cite{DitmarschvdHK07,fhmv:rak}.
Hereafter we often represent a set as some
sequence of its elements.

\begin{example} \label{ex1}
Here we consider the well-known puzzle of the muddy children.  We
assume familiarity with this scenario and refer to
\cite{DitmarschK15,fhmv:rak} for a detailed presentation.  The initial
model $\M$ for 3 children (red, blue, and green), where no child knows
whether she is muddy, can be represented as follows:
\begin{center}
{\scriptsize  \begin{tikzpicture}[scale=1.4, rotate = 45]
 \tikzstyle{vertex}=[circle, minimum size=20pt, inner sep=0pt]
 \tikzstyle{selected vertex} = [vertex, fill=red!24]
 \tikzstyle{selected edge} = [draw,line width=5pt,-,red!50]
 \tikzstyle{edge} = [draw,thick,-,black]

 \node[vertex] (v8) at (-0.5,-0.5) {(\textcolor{red}{0}, \textcolor{green}{0}, \textcolor{blue}{0})};
 \node[vertex] (v9) at (-0.5, 1) {(\textcolor{red}{1}, \textcolor{green}{0}, \textcolor{blue}{0})};
 \node[vertex] (v13) at (-0.08,1.6) {(\textcolor{red}{1},\textcolor{green}{1},\textcolor{blue}{0})};
 \node[vertex] (v12) at (-0.08, 0.1) {(\textcolor{red}{0},\textcolor{green}{1},\textcolor{blue}{0})};
 \node[vertex] (v10) at (1,-0.5) {(\textcolor{red}{0},\textcolor{green}{0},\textcolor{blue}{1})};
 \node[vertex] (v14) at (1.42,0.1) {(\textcolor{red}{0},\textcolor{green}{1},\textcolor{blue}{1})};
 \node[vertex] (v11) at (1,1) {(\textcolor{red}{1},\textcolor{green}{0},\textcolor{blue}{1})};
 \node[vertex] (v15) at (1.42, 1.6) {(\textcolor{red}{1},\textcolor{green}{1},\textcolor{blue}{1})};

 \path (v8) edge[color = red] node[below left] {$r$} (v9); 
 \path (v13) edge[color = red] node[below] {$r$} (v12); 
 \path (v10) edge[color = red] node[above] {$r$} (v11); 
 \path (v15) edge[color = red] node[above right] {$r$} (v14); 
 \path (v8) edge[color = blue] node[below right] {$b$} (v10); 
 \path (v14) edge[color = blue] node[below right] {$b$} (v12); 
 \path (v9) edge[color = blue] node[above left] {$b$} (v11); 
 \path (v15) edge[color = blue] node[above left] {$b$} (v13); 
 \path (v14) edge[color = green] node[right] {$g$} (v10); 
 \path (v8) edge[color = green] node[right] {$g$} (v12); 
 \path (v15) edge[color = green] node[left] {$g$} (v11); 
 \path (v9) edge[color = green] node[left] {$g$} (v13); 
  \end{tikzpicture}}
\end{center}

Now suppose that only red is muddy, i.e., the actual world is
$(1,0,0)$, and the father {\em locally} announces to red, green, and
blue that at least one child is muddy, that is, he announces that
formula $\atleast := m_r
\lor m_b \lor m_g$ is true.  The updated model $\M^-_{(100, \atleast, rgb)}$ is then given as follows, on the left:
\begin{center}
\begin{tabular}{cc}
{\scriptsize  \begin{tikzpicture}[scale=1.3, rotate = 45]
 \tikzstyle{vertex}=[circle, minimum size=20pt, inner sep=0pt]
 \tikzstyle{selected vertex} = [vertex, fill=red!24]
 \tikzstyle{selected edge} = [draw,line width=5pt,-,red!50]
 \tikzstyle{edge} = [draw,thick,-,black]

 \node[vertex] (v8) at (-0.5,-0.5) {(\textcolor{red}{0}, \textcolor{green}{0}, \textcolor{blue}{0})};
 \node[vertex] (v9) at (-0.5, 1) {(\textcolor{red}{1}, \textcolor{green}{0}, \textcolor{blue}{0})};
 \node[vertex] (v13) at (-0.08,1.6) {(\textcolor{red}{1},\textcolor{green}{1},\textcolor{blue}{0})};
 \node[vertex] (v12) at (-0.08, 0.1) {(\textcolor{red}{0},\textcolor{green}{1},\textcolor{blue}{0})};
 \node[vertex] (v10) at (1,-0.5) {(\textcolor{red}{0},\textcolor{green}{0},\textcolor{blue}{1})};
 \node[vertex] (v14) at (1.42,0.1) {(\textcolor{red}{0},\textcolor{green}{1},\textcolor{blue}{1})};
 \node[vertex] (v11) at (1,1) {(\textcolor{red}{1},\textcolor{green}{0},\textcolor{blue}{1})};
 \node[vertex] (v15) at (1.42, 1.6) {(\textcolor{red}{1},\textcolor{green}{1},\textcolor{blue}{1})};

 \path (v13) edge[color = red] node[below] {$r$} (v12); 
 \path (v10) edge[color = red] node[above] {$r$} (v11); 
 \path (v15) edge[color = red] node[above right] {$r$} (v14); 
 \path (v8) edge[color = blue] node[below right] {$b$} (v10); 
 \path (v14) edge[color = blue] node[below right] {$b$} (v12); 
 \path (v9) edge[color = blue] node[above left] {$b$} (v11); 
 \path (v15) edge[color = blue] node[above left] {$b$} (v13); 
 \path (v14) edge[color = green] node[right] {$g$} (v10); 
 \path (v8) edge[color = green] node[right] {$g$} (v12); 
 \path (v15) edge[color = green] node[left] {$g$} (v11); 
 \path (v9) edge[color = green] node[left] {$g$} (v13); 

  \end{tikzpicture} } &  {\scriptsize  \begin{tikzpicture}[scale=1.3, rotate = 45]
 \tikzstyle{vertex}=[circle, minimum size=20pt, inner sep=0pt]
 \tikzstyle{selected vertex} = [vertex, fill=red!24]
 \tikzstyle{selected edge} = [draw,line width=5pt,-,red!50]
 \tikzstyle{edge} = [draw,thick,-,black]

 \node[vertex] (v8) at (-0.5,-0.5) {(\textcolor{red}{0}, \textcolor{green}{0}, \textcolor{blue}{0})};
 \node[vertex] (v9) at (-0.5, 1) {(\textcolor{red}{1}, \textcolor{green}{0}, \textcolor{blue}{0})};
 \node[vertex] (v13) at (-0.08,1.6) {(\textcolor{red}{1},\textcolor{green}{1},\textcolor{blue}{0})};
 \node[vertex] (v12) at (-0.08, 0.1) {(\textcolor{red}{0},\textcolor{green}{1},\textcolor{blue}{0})};
 \node[vertex] (v10) at (1,-0.5) {(\textcolor{red}{0},\textcolor{green}{0},\textcolor{blue}{1})};
 \node[vertex] (v14) at (1.42,0.1) {(\textcolor{red}{0},\textcolor{green}{1},\textcolor{blue}{1})};
 \node[vertex] (v11) at (1,1) {(\textcolor{red}{1},\textcolor{green}{0},\textcolor{blue}{1})};
 \node[vertex] (v15) at (1.42, 1.6) {(\textcolor{red}{1},\textcolor{green}{1},\textcolor{blue}{1})};

 \path (v13) edge[color = red] node[below] {$r$} (v12); 
 \path (v10) edge[color = red] node[above] {$r$} (v11); 
 \path (v15) edge[color = red] node[above right] {$r$} (v14); 
 \path (v14) edge[color = blue] node[below right] {$b$} (v12); 
 \path (v9) edge[color = blue] node[above left] {$b$} (v11); 
 \path (v15) edge[color = blue] node[above left] {$b$} (v13); 
 \path (v14) edge[color = green] node[right] {$g$} (v10); 
 \path (v8) edge[color = green] node[right] {$g$} (v12); 
 \path (v15) edge[color = green] node[left] {$g$} (v11); 
 \path (v9) edge[color = green] node[left] {$g$} (v13); 

  \end{tikzpicture} } \\
\end{tabular}
\end{center}

Notice that only the indistinguishability relation for red is updated,
as in all worlds that blue and green consider possible from $(1,0,0)$,
formula $\atleast$ is indeed true.
Hence, after the father's local announcement, in $(1,0,0)$ all
children know that at least one child is muddy, i.e., $(\M, (1,0,0)) \models
[\atleast]^-_{rgb} E \atleast$. Moreover, red learns that she is
muddy, i.e., $(\M, (1,0,0)) \models [\atleast]^-_{rgb} K_r m_r$.  

On the other hand, the father's local announcement is not enough to
make $\atleast$ common knowledge for red, green and blue, that is,
$(\M, (1,0,0)) \not \models [\atleast]^-_{rgb} C_{rgb} \atleast$ as, for
instance, blue considers possible that red considers possible that
blue considers possible that no child is muddy, that is,
$(\M, (1,0,0)) \models [\atleast]^-_{rgb} M_b M_r M_b \neg \atleast$ via
epistemic path $(1,0,0)
\sim_b (1,1,0) \sim_r (0,1,0) \sim_b (0,0,0)$. This is in contrast
with the classic version of the muddy children puzzle with public
announcements. In general, for every state $s \in \{0, 1 \}^3$
different from $(0,0,0)$, announcing privately $\atleast$ in $s$ is
not sufficient to derive common knowledge of $\atleast$:
$(\M, s) \not \models [\atleast]^-_{rgb} C_{rgb} \atleast$

Now suppose that at the beginning, again in world $(1,0,0)$, the
father {\em globally} announces to red and blue only that at least one
child is muddy. The updated model $\M^+_{(100,\atleast,rb)}$ is 
shown above on the right.
%
%
Specifically, in $\M^+_{(100,\atleast,rb)}$ the indistinguishability
relations for both red and blue are updated, and as a result, after
the father's global announcement, in $(1,0,0)$ red and blue have
common knowledge that at least one child is muddy:
$(\M, (1,0,0)) \models [\atleast]^+_{rb} C_{rb} \atleast$.  
%
However, also in this case the father's global announcement is not
enough to make $\atleast$ common knowledge amongst all children, that
is, $(\M, (1,0,0)) \not \models [\atleast]^+_{rb} C_{rgb} \atleast$.


\end{example}

\begin{example} \label{ex:bisim}
Here we consider a simple scenario of communication between a sender
$s$ and a receiver $r$ over a reliable channel that is listened to by
eavesdropper $e$.  The initial state is represented by the following
model $\N$, in which $s$ knows the actual value of the bit (either 0
or 1), while $r$ and $e$ are unsure about it.
\begin{center}
  \begin{tikzpicture}[scale=1.3, node distance = 2cm]
 \tikzstyle{vertex}=[circle, draw, minimum size=20pt, inner sep=0pt]
 \tikzstyle{selected vertex} = [vertex, fill=red!24]
 \tikzstyle{selected edge} = [draw,line width=5pt,-,red!50]
 \tikzstyle{edge} = [draw,thick,-,black]

 \node[vertex, label=below:{$w_1$}] (v1) {$0$};
 \node[vertex, label=below:{$w_2$}] (v2) [right of = v1]  {$1$};

 \path (v1) edge[<->] node[below] {$r,e$} (v2); 
 \path (v1) edge[loop left] node[left] {$s, r,e$} (v1); 
 \path (v2) edge[loop right] node[right] {$s, r,e$} (v2); 
  \end{tikzpicture}
\end{center}

Then, after $s$ has communicated to $r$ the value of the bit, we
obtain the updated model $\N_{(w_1, bit = 0, r)}$:
\begin{center}
  \begin{tikzpicture}[scale=1.3, node distance = 2cm]
 \tikzstyle{vertex}=[circle, draw, minimum size=20pt, inner sep=0pt]
 \tikzstyle{selected vertex} = [vertex, fill=red!24]
 \tikzstyle{selected edge} = [draw,line width=5pt,-,red!50]
 \tikzstyle{edge} = [draw,thick,-,black]

 \node[vertex, label=below:{$w_1$}] (v1) {$0$};
 \node[vertex, label=below:{$w_2$}] (v2) [right of = v1]  {$1$};

 \path (v1) edge[<->] node[below] {$e$} (v2); 
 \path (v1) edge[loop left] node[left] {$s, r,e$} (v1); 
 \path (v2) edge[loop right] node[right] {$s, r,e$} (v2); 
  \end{tikzpicture}
\end{center}

Hence, we have that $(\N, w_1) \models [bit = 0]_{r} K_r (bit =
0)$, as expected. On the other hand, the eavesdropper does not learn
the value of the bit, but she learns that $r$ knows it: $(\N,
w_1) \models [bit = 0]_{r} (\neg Kw_e (bit = 0) \land K_e Kw_r (bit =
0))$.

Now compare model $\N$ above with the following model $\N'$,
which is bisimilar in the standard sense \cite{BlackburndRV01}
(which we call modally bisimilar or $m$-bisimilar):
\begin{center}
  \begin{tikzpicture}[scale=1.3, node distance = 2cm]
 \tikzstyle{vertex}=[circle, draw, minimum size=20pt, inner sep=0pt]
 \tikzstyle{selected vertex} = [vertex, fill=red!24]
 \tikzstyle{selected edge} = [draw,line width=5pt,-,red!50]
 \tikzstyle{edge} = [draw,thick,-,black]

 \node[vertex, label=above:{$v'_1$}] (v1) {$0$};
 \node[vertex, label=above:{$v'_2$}] (v2) [right of = v1]  {$1$};
 \node[vertex, label=below:{$w'_1$}, node distance = 1.8cm] (v3) [below of = v1]  {$0$};
 \node[vertex, label=below:{$w'_2$}] (v4) [right of = v3]  {$1$};

 \path (v1) edge[<->] node[below] {$r,e$} (v2); 
 \path (v1) edge[loop left] node[left] {$s, r,e$} (v1); 
 \path (v2) edge[loop right] node[right] {$s, r,e$} (v2); 
 \path (v1) edge[<->] node[left] {$s, e$} (v3); 
 \path (v2) edge[<->] node[right] {$s, e$} (v4); 
 \path (v3) edge[<->] node[below] {$r,e$} (v4); 
 \path (v3) edge[loop left] node[left] {$s, r,e$} (v3); 
 \path (v4) edge[loop right] node[right] {$s, r,e$} (v4); 

  \end{tikzpicture}
\end{center}

However, this time, after communicating to $r$ the value of the bit,
we obtain the updated model $\N'_{(w'_1, bit = 0, r)}$, which is not
bisimilar to $\N'$:
\begin{center}
  \begin{tikzpicture}[scale=1.3, node distance = 2cm]
 \tikzstyle{vertex}=[circle, draw, minimum size=20pt, inner sep=0pt]
 \tikzstyle{selected vertex} = [vertex, fill=red!24]
 \tikzstyle{selected edge} = [draw,line width=5pt,-,red!50]
 \tikzstyle{edge} = [draw,thick,-,black]

 \node[vertex, label=above:{$v'_1$}] (v1) {$0$};
 \node[vertex, label=above:{$v'_2$}] (v2) [right of = v1]  {$1$};
 \node[vertex, label=below:{$w'_1$}, node distance = 1.8cm] (v3) [below of = v1]  {$0$};
 \node[vertex, label=below:{$w'_2$}] (v4) [right of = v3]  {$1$};

 \path (v1) edge[<->] node[below] {$r, e$} (v2); 
 \path (v1) edge[loop left] node[left] {$s, r,e$} (v1); 
 \path (v2) edge[loop right] node[right] {$s, r,e$} (v2); 
 \path (v1) edge[<->] node[left] {$s,e$} (v3); 
 \path (v2) edge[<->] node[right] {$s,e$} (v4); 
 \path (v3) edge[<->] node[below] {$e$} (v4); 
 \path (v3) edge[loop left] node[left] {$s, r,e$} (v3); 
 \path (v4) edge[loop right] node[right] {$s, r,e$} (v4); 

  \end{tikzpicture}
\end{center}

In particular, in $w'_1$ eavesdropper $e$ does not know that receiver
$r$ has learnt the value of the bit: $(\N', w'_1) \models [bit =
0]_{r} \neg K_e Kw_r (bit = 0)$.  We elaborate on the fact that
formulas in GLAL are not preserved under standard modal bisimulations.
Specifically, in model $\N'$, and differently from $\N$, sender
$s$ and eavesdropper $e$ are not able to distinguish between bisimilar
states $w'_1$ and $v'_1$, which are nonetheless distinct for receiver
$r$. We can interpret this feature of $\N'$ as formalising the fact
that $s$ and $e$ are uncertain as to $r$'s behaviour. Indeed, since
these states are indistinguishable for $s$ and $e$, then they must
differ as to the epistemic state of $r$. And this is reflected in the
different results of announcing $bit = 0$ to $r$.  This subtle
distinction is reminiscent of the notion of {\em attentive
announcements} put forward in \cite{BolanderDHLPS16}. We discuss this
point in detail in Section~\ref{comparison}.
\end{example}

These examples are intended to illustrate the formal
features of GLAL to represent global and local communication. 
In particular, GLAL allows to express local communication that cannot
be captured in PAL. In Section~\ref{expressivity} we analyse the
expressivity of GLAL and provide a formal proof of the fact that it is
strictly more expressive than PAL. But first we compare GLAL to
related accounts in the literature.

\section{Discussion and Comparison} \label{comparison}


In this section we compare our logic to other dynamic epistemic
logics, and to accounts of awareness.

\paragraph*{Public announcements}

The logic GLAL can embed public announcement
 logic \cite{baltagetal:1998}.  We show that the global announcement
 modality $[\phi]^+_{Ag}$ for the grand coalition simulates operator
 $[\phi]$ from PAL. Let us recall the satisfaction clause for
 $[\phi]$-formulas in PAL:
\begin{eqnarray*}
[\![[\psi] \psi']\!]_{\M} & = & \{ w \in W \mid \text{if }  w \in [\![\psi]\!]_{\M} \text{ then }  w \in [\![\psi']\!]_{\M_{\psi}} \}
\end{eqnarray*}
where the {\em refinement} $\M_{\psi} = \langle W_{\psi}, \{ R_{\psi,a}
\}_{a \in Ag}, V_{\psi} \rangle$ of model $\M$ w.r.t.~formula
$\psi$ is defined as
\myi $W_{\psi} = W \cap [\![\psi]\!]_{\M}$; 
\myii for every agent $a \in Ag$, $R_{\psi,a} = R_a \cap ([\![\psi]\!]_{\M} \times [\![\psi]\!]_{\M})$;
and \myiii for every $p \in AP$, $V_{\psi}(p) = V(p) \cap [\![\psi]\!]_{\M}$.
%
Intuitively, $\M_{\psi}$ is the restriction of $\M$ to the worlds
satisfying $\psi$.

Now consider the recursively defined embedding $tr: \L_{pal} \to \L_{glal}$ with only non-trivial clause
$tr([\phi] \phi') = [tr(\phi)]^+_{Ag} tr(\phi')$. It is then easy to prove that
\begin{proposition} \label{prop10}
    For all formulas $\psi$ in PAL,
      $(\M, w) \models \psi$ iff $(\M, w) \models tr(\psi)$.
\end{proposition}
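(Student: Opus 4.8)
The plan is to argue by induction on the structure of the PAL formula $\psi$, establishing the equivalence uniformly for all models $\M$ and all worlds $w$. Since $tr$ is the identity on atoms, Boolean connectives and common-knowledge operators, and since each of these constructs is interpreted in GLAL exactly as in PAL (same satisfaction sets $\denot{\cdot}_{\M}$, same relations $R^C_A$), every case except the announcement case follows immediately from the induction hypothesis. The entire difficulty is thus concentrated in the case $\psi = [\phi]\phi'$, where $tr([\phi]\phi') = [tr(\phi)]^+_{Ag}\,tr(\phi')$. By the induction hypothesis on $\phi$ we have $\denot{\phi}_{\M} = \denot{tr(\phi)}_{\M}$, so the two preconditions ``$w \in \denot{\phi}_{\M}$'' and ``$w \in \denot{tr(\phi)}_{\M}$'' coincide, and it remains to show, assuming $w \in \denot{\phi}_{\M}$, that $(\M_{\phi}, w) \models \phi'$ iff $(\M^+_{(w,\phi,Ag)}, w) \models tr(\phi')$.

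The crux is that the PAL update $\M_{\phi}$ deletes all $\neg\phi$-worlds, whereas the GLAL refinement $\M^+_{(w,\phi,Ag)}$ keeps every world but instead severs, inside the $R^C_{Ag}$-class of $w$, every edge joining a $\phi$-world to a $\neg\phi$-world. I would reconcile the two by comparing generated submodels. The geometric lemma to establish is that, when $w \in \denot{\phi}_{\M}$, the submodel of $\M^+_{(w,\phi,Ag)}$ generated from $w$ is \emph{the same} Kripke model as the submodel of $\M_{\phi}$ generated from $w$. Indeed, from the $\phi$-world $w$ the refined relation $R^+_a(w) = R_a(w) \cap \denot{\phi}_{\M}$ reaches only $\phi$-worlds inside $w$'s class; arguing by the one-step relation one sees that the set of worlds reachable from $w$ coincides in the two models, that on this common carrier both relations equal $R_a \cap (\denot{\phi}_{\M} \times \denot{\phi}_{\M})$ (which is precisely $R_{\phi,a}$), and that the two valuations agree since each restricts $V$ to $\phi$-worlds.

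With this lemma, and invoking the standard invariance of PAL formulas under generated submodels on the $\M_{\phi}$ side, the claim reduces to evaluating $\phi'$ versus $tr(\phi')$ on one and the same pointed model, which is exactly the induction hypothesis for the strictly smaller formula $\phi'$. To make the $\M^+$ side of this reduction legitimate, one further auxiliary result is needed: a generated-submodel invariance lemma for the fragment in the image of $tr$ (GLAL formulas all of whose announcements are global and indexed to $Ag$), stating that the truth of such a formula at $w$ depends only on the $R^C_{Ag}$-class of $w$.

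The hard part will be precisely this auxiliary lemma: unlike for plain epistemic logic, one must check that the global announcement operator itself is local. This holds because $R^C_{Ag}(w)$ is closed under every $R_a$, and the refinement $\M^+_{(\cdot,\cdot,Ag)}$ alters edges only within that class; consequently, restricting $\M$ to the class and then refining yields the same model as refining and then restricting, so a routine induction on formula complexity carries the invariance through nested global announcements. Once this lemma is in place, the announcement case closes and the induction is complete.
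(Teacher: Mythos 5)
Your proof is correct and follows essentially the same route as the paper's: induction on $\psi$ with all work concentrated in the announcement case, resolved by observing that the submodel of $\M^+_{(w,\phi,Ag)}$ generated from $w$ coincides with that of $\M_{\phi}$ and that satisfaction at $w$ depends only on the generated submodel. The only difference is one of care, not of substance: you isolate as an explicit auxiliary lemma the generated-submodel invariance for formulas in the image of $tr$ (which requires checking that $[\cdot]^+_{Ag}$ itself is local to the $R^C_{Ag}$-class), a step the paper's proof leaves implicit in its remark that unreachable worlds ``do not account for the satisfaction of PAL formulas at $w$.''
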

\begin{proofsk}
The only non-trivial case is for $\psi = [\phi] \phi'$. This follows
 by inductive hypothesis on $\phi$ and $\phi'$, by observing that in
 $\M^+_{(w, \phi, Ag)}$, any state $w'$ is reachable from $w$ via
 $R^{+C}_{Ag}$ iff $w'$ is reachable from $w$ in $\M_{\phi}$.
\end{proofsk}

The following corollary follows immediately from Proposition~\ref{prop10}.
\begin{corollary} \label{prop1}
    For all formulas $\phi, \psi$ in PAL,
     $(\M, w) \models [\phi]^+_{Ag} \psi$ iff $(\M, w) \models [\phi] \psi$
\end{corollary}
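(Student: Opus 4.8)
The plan is to reduce the claim to Proposition~\ref{prop10}. First I would extract from that proposition the set identity I need: since it gives $(\M,w)\models\phi$ iff $(\M,w)\models tr(\phi)$ at every world $w$ of every model $\M$, the two formulas have the same satisfaction set, i.e.\ $\denot{\phi}_\M = \denot{tr(\phi)}_\M$, and likewise $\denot{\psi}_\M = \denot{tr(\psi)}_\M$, in every model $\M$ (in particular in any refinement of $\M$). Note that the syntactic equality $tr(\phi)=\phi$ does \emph{not} hold in general even for PAL formulas, since $tr$ rewrites each $[\cdot]$ into $[\cdot]^+_{Ag}$; so a denotational rather than a syntactic argument is genuinely required.

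The key step is to observe that the refinement $\M^+_{(w,\phi,Ag)}$ depends on the announced formula only through its satisfaction set $\denot{\phi}_\M$: by Definition~\ref{def:semantics} the carrier $W$ and the assignment $V$ are left unchanged, and each $R^+_a$ is defined purely in terms of $\denot{\phi}_\M$ and $\denot{\neg\phi}_\M$. Hence, by the set identity above, $\M^+_{(w,\phi,Ag)}$ and $\M^+_{(w,tr(\phi),Ag)}$ are the very same model, which I call $\M'$. Unfolding the satisfaction clause for $[\cdot]^+_{Ag}$ on both sides, $(\M,w)\models[\phi]^+_{Ag}\psi$ reads ``if $w\in\denot{\phi}_\M$ then $w\in\denot{\psi}_{\M'}$'', while $(\M,w)\models[tr(\phi)]^+_{Ag}tr(\psi)$ reads ``if $w\in\denot{tr(\phi)}_\M$ then $w\in\denot{tr(\psi)}_{\M'}$''. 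The antecedents coincide by the set identity, and the consequents coincide because $\denot{\psi}_{\M'}=\denot{tr(\psi)}_{\M'}$, again by the set identity applied inside $\M'$. Therefore $(\M,w)\models[\phi]^+_{Ag}\psi$ iff $(\M,w)\models[tr(\phi)]^+_{Ag}tr(\psi)$.

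Finally I would close the chain. Since $[tr(\phi)]^+_{Ag}tr(\psi)$ is by definition $tr([\phi]\psi)$, and $[\phi]\psi$ is a PAL formula, Proposition~\ref{prop10} yields $(\M,w)\models tr([\phi]\psi)$ iff $(\M,w)\models[\phi]\psi$. Composing the two equivalences gives $(\M,w)\models[\phi]^+_{Ag}\psi$ iff $(\M,w)\models[\phi]\psi$, as required. The step I expect to be most delicate is the claim that the two refinements coincide as models: it is precisely the point that a refinement is a function of the \emph{denotation} of the announced formula and not of its syntactic form, so that replacing $\phi$ by the denotationally equal $tr(\phi)$ leaves $\M'$ untouched. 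Everything else is a routine unfolding of Definition~\ref{def:semantics} together with two applications of Proposition~\ref{prop10}.
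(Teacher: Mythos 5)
Your proof is correct and takes the same route as the paper, which merely states that the corollary ``follows immediately from Proposition~\ref{prop10}.'' You have additionally filled in the one step the paper leaves implicit --- that $[\phi]^+_{Ag}\psi$ and $tr([\phi]\psi)=[tr(\phi)]^+_{Ag}tr(\psi)$ agree because the refinement $\M^+_{(w,\cdot,Ag)}$ depends on the announced formula only through its denotation $\denot{\phi}_\M$ --- and that is exactly the right justification for why the reduction to Proposition~\ref{prop10} goes through.
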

In Section \ref{expressivity} we use that public
announcement logic can be embedded into GLAL, to demonstrate that GLAL
is at least as expressive as PAL.

\paragraph*{Private announcements}
A local announcement is a private announcement to some agents. It will
therefore come as no surprise that there is also a strong relation
between GLAL and private announcements. Semi-private announcements
have been proposed and discussed in
in \cite{gerbrandyetal:1997,hvd.thesis:2000,hoekpauly:06a,baltagetal.hpi:2008}.
Specifically, after announcing semi-privately $\phi$ to coalition $A$,
all agents in $A$ know that $\phi$ is true, and the agents in
$Ag \setminus A$ know that all agents in $A$ know whether $\phi$ is
true. Now compare this to a local announcement of $\phi$ to coalition
$A$, after which all agents in $A$ know that $\phi$ is true, and the
agents in $Ag \setminus A$ are uncertain between all agents in $A$
knowing that $\phi$ is true, or not having observed the
announcement. The distinction between knowing whether and knowing that
is not so fundamental here (we could just as well have tweaked the
semantics to have knowing whether announcements $[\phi]^+_A$ and
$[\phi]^-_A$.) 
Whereas the
locality of our framework is an essential difference.

The standard way to define the semantics of semi-private announcement
is by refinement of the accessibility relation, namely as $R^{sp}_a =
R_a$ for $a \not\in A$, whereas $R^{sp}_a = R_a \cap
([\![\psi]\!]_{\M}^2 \cup [\![{\neg\psi}]\!]_{\M}^2))$ for $a \in
A$. Staying close to our semantics, and using, as for public
announcements, the isomorphy of point-generated submodels,
semi-private announcements can be interpreted by model refinement
$\M^{sp}_{(w,\psi,a)}$ according to which $W^{sp} = W$, $V^{sp} = V$,
and for $a \in A$,
\[
R^{sp}_a(v) =  
  \begin{cases} R_a(v) \cap [\![\psi]\!]_{\M} & \mbox{if } v \in R^C_{Ag}(w) \cap [\![\psi]\!]_{\M}\\
R_a(v) \cap [\![\neg \psi]\!]_{\M} & \mbox{if } v \in R^C_{Ag}(w) \cap [\![{\neg \psi}]\!]_{\M}\\
  R_a(v) & \mbox{otherwise} 
  \end{cases}
  \]
However, there is no embedding in this case.

\paragraph*{Attentive announcements}

In \cite{BolanderDHLPS16} the authors introduce a logic of {\em
attention-based announcements}: agents will only process the new
information $\phi$ if they are paying attention. Whether they pay
attention is handled by a designated set of propositional
variables. There is a surprising close relation to our logic GLAL,
despite the absence of such attention variables. Consider again
Example \ref{ex:bisim}, wherein we modelled the announcement $[bit
=0]$ to agent $r$ in state $w'_1$ of model $\N'$. Ignore the role of
agent $e$ in the modelling. Although $r$ processes the new
information, agent $s$ is uncertain about this fact. Agent $s$
considers that possible, but also considers it possible that $r$
remains uncertain between $bit=0$ and $bit=1$, namely in states $v_1'$
and $v_2'$. Now consider adding an `attention variable' $h_r$ for
agent $r$ to the model, as in \cite{BolanderDHLPS16}, such that $h_r$
is true in $w'_1$ and $w'_2$ but false in $v'_1$ and $v'_2$. Then
(modulo technical details) such an announcement of $bit=0$ to $r$
corresponds to an attention-based announcement of $bit=0$ to $r$
wherein agent $s$ is uncertain whether $r$ is paying attention. The
`technical details' in which the semantics differ are that (i)
in \cite{BolanderDHLPS16} truly private aspects of announcements are
modelled (the believed announcements of \cite{gerbrandyetal:1997}),
which do not preserve equivalence relations, whereas in our proposal
we consider semi-private announcements that preserve equivalence
relations; and (ii) our announcements are not necessarily public. But,
essentially, an announcement $[\phi]^-_A$ is strictly related to an
attention-based announcement of $\phi$ to the agents in $A$, wherein
agents not in $A$ are uncertain whether those in $A$ are {\em
individually} paying attention. On the other hand, an announcement
$[\phi]^+_A$ is related to an attention-based announcement of $\phi$
to the agents in $A$, wherein agents not in $A$ are uncertain whether
those in $A$ are {\em jointly} paying attention. It seems remarkable
that in GLAL we can simulate attention of agents without designated
 variables.



\section{Validities} \label{validities}

In this section we consider notable validities in GLAL, which shed
light on the meaning of the global and local announcement operators.
Firstly, we observe that
after announcing (truthfully) a propositional formula $\phi$ to the
agents in $A$, they know $\phi$.
\begin{proposition} \label{lem1}
For every propositional formula $\phi \in \L_{pl}$,
\begin{eqnarray}
 \models [\phi]^-_A E_A \phi \label{ref1}\\
 \models [\phi]^+_A C_A \phi \label{ref2}
\end{eqnarray}
\end{proposition}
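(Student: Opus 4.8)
The plan is to prove the two validities in parallel, since the argument for $[\phi]^-_A E_A \phi$ and $[\phi]^+_A C_A \phi$ share the same core idea: after announcing a propositional $\phi$ to the agents in $A$, the refinement splits each relevant equivalence class into a $\phi$-part and a $\neg\phi$-part, and in the actual world the agents in $A$ can only reach $\phi$-worlds. To prove validity, I would fix an arbitrary model $\M = \langle \F, V \rangle$ and an arbitrary world $w \in W$, and show $(\M,w) \models [\phi]^-_A E_A \phi$ and $(\M,w) \models [\phi]^+_A C_A \phi$.

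\textbf{The key use of propositionality.} The crucial observation, which I would isolate as the first step, is that because $\phi \in \L_{pl}$ is \emph{propositional}, its satisfaction set is invariant under refinement: since the refinements $\M^-_{(w,\phi,A)}$ and $\M^+_{(w,\phi,A)}$ leave $W$ and $V$ unchanged (by Def.~\ref{def:semantics}, $W^- = W^+ = W$ and $V^- = V^+ = V$), and since propositional truth depends only on $W$ and $V$, we have $[\![\phi]\!]_{\M^-_{(w,\phi,A)}} = [\![\phi]\!]_{\M}$ and likewise for $+$. This is exactly why the hypothesis $\phi \in \L_{pl}$ is needed: for a non-propositional $\phi$, the epistemic truth value could shift under the update, and the validities would fail.

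\textbf{Proof of $\models [\phi]^-_A E_A \phi$.} By the semantics of $[\phi]^-_A$, I must show that if $w \in [\![\phi]\!]_{\M}$ then $w \in [\![E_A \phi]\!]_{\M^-_{(w,\phi,A)}}$; and if $w \notin [\![\phi]\!]_{\M}$ the clause is vacuously satisfied. So assume $w \in [\![\phi]\!]_{\M}$. Unfolding $E_A \phi = \bigwedge_{a \in A} K_a \phi$, I must show for each $a \in A$ and each $v$ with $R^-_a(w,v)$ that $v \in [\![\phi]\!]_{\M^-_{(w,\phi,A)}}$, which by the invariance step equals $[\![\phi]\!]_{\M}$. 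Now since $w \in R_a(w) \cap [\![\phi]\!]_{\M}$, the first case of the definition of $R^-_a$ applies at $w$, giving $R^-_a(w) = R_a(w) \cap [\![\phi]\!]_{\M}$. Hence every $v \in R^-_a(w)$ satisfies $v \in [\![\phi]\!]_{\M}$, which is what we needed.

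\textbf{Proof of $\models [\phi]^+_A C_A \phi$.} The structure is identical, but now I work with the common-knowledge relation $R^{+C}_A$ in the refinement. Assume $w \in [\![\phi]\!]_{\M}$; I must show $w \in [\![C_A \phi]\!]_{\M^+_{(w,\phi,A)}}$, i.e.\ every $v$ reachable from $w$ via $R^{+C}_A$ lies in $[\![\phi]\!]_{\M^+_{(w,\phi,A)}} = [\![\phi]\!]_{\M}$. The plan is to argue that $R^{+C}_A(w) \subseteq [\![\phi]\!]_{\M}$ by showing that each one-step $R^+_a$-edge out of a $\phi$-world (for $a \in A$) stays inside $[\![\phi]\!]_{\M}$, and then closing under transitivity. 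Concretely, for any $v \in R^C_A(w) \cap [\![\phi]\!]_{\M}$ the first case of the definition of $R^+_a$ gives $R^+_a(v) = R_a(v) \cap [\![\phi]\!]_{\M} \subseteq [\![\phi]\!]_{\M}$; and all such successors $v'$ are themselves in $R^C_A(w)$ (since $R^+_a \subseteq R_a \subseteq R^C_A$ and $R^C_A$ is transitive with $w$ as basepoint), so they again fall under the first case. An induction on the length of the $R^{+C}_A$-path from $w$ then shows every reachable world is a $\phi$-world. The main obstacle, and the step requiring the most care, is precisely this inductive argument that the refinement confines the entire $A$-reachable region to $[\![\phi]\!]_{\M}$: one must verify that the basepoint used in the case distinction of $R^+_a$ is $w$ throughout (not the current world $v$), so that membership in $R^C_A(w)$ is preserved along the path and the first case of the definition keeps applying.
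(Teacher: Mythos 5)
Your proof is correct and takes essentially the same approach as the paper's: both arguments rest on the observation that a propositional $\phi$ has the same satisfaction set in the refinement (since $W$ and $V$ are unchanged) and that, because $w$ itself lies in $R_a(w)$ (resp.\ $R^C_A(w)$) and satisfies $\phi$, the first case of the definition of $R^-_a$ (resp.\ $R^+_a$) applies and confines the accessible worlds to the $\phi$-worlds. The only difference is presentational: the paper proves (1) by contradiction and dismisses (2) as ``similar'', whereas you argue directly and actually spell out the reachability induction for the $R^{+C}_A$-closure in (2), which is a welcome addition but not a different method.
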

%

According to Proposition~\ref{lem1}, if a propositional formula $\phi$
is announced locally, then all agents involved in the announcement
jointly know $\phi$; while if $\phi$ is announced globally, then it also
becomes common knowledge.
%
 Proposition~\ref{lem1} does not hold for general formulas
$\phi \in \L_{glal}$. As a counterexample, consider Moore's formula
$p \land \neg K_a p$.

We anticipate that, as a consequence of Theorem~\ref{theor1} below,
differently from PAL there is no set of validities in GLAL to rewrite
any announcement in terms of simpler formulas.  Nonetheless,
the following formulas are validities in GLAL:
\begin{eqnarray*}
        [\phi]^-_A p &  \leftrightarrow & \phi \to p \\
        \left[\phi \right]^-_A  \neg \psi &   \leftrightarrow & \phi \to \neg [\phi]^-_A  \psi \\
        \left[\phi\right]^-_A  (\psi \land \psi') &  \leftrightarrow &  [\phi]^-_A  \psi \land [\phi]^-_A  \psi' 
\end{eqnarray*}
and similarly for $\left[\phi\right]^+_A$.
%
Thus, both announcement operators commute with propositional
connective.

Moreover, 
epistemic operators and nested announcements commute with the announcement operators if they
 refer to the 
same coalition of agents.
\begin{proposition} \label{prop20}
The following are GLAL validities:
\begin{eqnarray} 
      \left[\phi \right]^-_A  E_A \psi  & \leftrightarrow & \phi \to E_A [\phi]^-_A  \psi \label{eq1}\\
\left[\phi \right]^-_A \left[\phi' \right]^-_A \psi & \leftrightarrow & \left[\phi \land \left[\phi \right]^-_A \phi' \right]^-_A  \psi \label{form1} \\
\left[\phi \right]^+_A \left[\phi' \right]^+_A \psi & \leftrightarrow & \left[\phi \land \left[ \phi \right]^+_A \phi' \right]^+_A  \psi  \label{form2}
\end{eqnarray}
\end{proposition}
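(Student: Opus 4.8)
The plan is to prove each biconditional by unfolding Definition~\ref{def:semantics} at the point of evaluation and reducing the claim to a comparison of the truth of $\psi$ in two refined models. Start with~(\ref{eq1}): fix $\M$ and $w$; the clause for $[\phi]^-_A$ makes both sides vacuous unless $w\in\denot{\phi}_{\M}$, so assume $w\in\denot{\phi}_{\M}$ and compare $w\in\denot{E_A\psi}_{\M^-_{(w,\phi,A)}}$ with $w\in\denot{E_A\,[\phi]^-_A\psi}_{\M}$. Expanding $E_A=\bigwedge_{a\in A}K_a$, the left side quantifies for each $a\in A$ over $w'\in R^-_a(w)$, and since $w\in\denot{\phi}_{\M}$ the refinement gives $R^-_a(w)=R_a(w)\cap\denot{\phi}_{\M}$. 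On the right the quantifier ranges over $w'\in R_a(w)$, but the antecedent $w'\in\denot{\phi}$ built into $[\phi]^-_A\psi$ makes every $w'\notin\denot{\phi}$ vacuous, so the effective range is again $R_a(w)\cap\denot{\phi}_{\M}$. Both sides therefore quantify over the same worlds, and~(\ref{eq1}) reduces to the pointwise statement: for every $a\in A$ and every $w'\in R_a(w)\cap\denot{\phi}_{\M}$, one has $(\M^-_{(w,\phi,A)},w')\models\psi$ iff $(\M^-_{(w',\phi,A)},w')\models\psi$.

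The whole proposition thus turns on a coincidence lemma relating the pointed refinement taken at $w$ to the one taken at a successor $w'$. The easy half is the announcing agent $a$: as $w'\in R_a(w)$ we have $R_a(w')=R_a(w)$, so the local update splits the same $a$-class in both models and $R^-_a$ agrees. The main obstacle is the other agents $b\in A\setminus\{a\}$: the local refinement only splits the $b$-class $R_b(w)$ of the announcing world, so whenever $w'$ lies outside $R_b(w)$ the model $\M^-_{(w,\phi,A)}$ leaves $R_b(w')$ untouched while $\M^-_{(w',\phi,A)}$ splits it by $\phi$; the two refinements then genuinely disagree at $w'$ as soon as $\psi$ inspects $b$'s knowledge. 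I expect this to be exactly where the PAL-style reduction resists the pointed semantics, so the pointwise claim seems to require either a real restriction (e.g.\ $A$ a singleton $\{a\}$, where $R_a(w')=R_a(w)$ forces the two refinements to coincide) or a side condition on $\psi$. Pinning down the conditions under which this coincidence survives is the delicate step carrying~(\ref{eq1}).

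For the composition laws I would prove the stronger structural fact that the iterated and the ``merged'' refinements have the same accessibility relations around $w$. For~(\ref{form1}), write $\chi=\phi\wedge[\phi]^-_A\phi'$; under the shared hypotheses $w\in\denot{\phi}_{\M}$ and $w\in\denot{\phi'}_{\M^-_{(w,\phi,A)}}$, the first refinement splits $R_a(w)$ by $\phi$ and re-splits the surviving class by $\denot{\phi'}_{\M^-_{(w,\phi,A)}}$, and this must match the single split of $R_a(w)$ by $\denot{\chi}_{\M}$. This reduces to the identity $\denot{[\phi]^-_A\phi'}_{\M}\cap R_a(w)\cap\denot{\phi}_{\M}=\denot{\phi'}_{\M^-_{(w,\phi,A)}}\cap R_a(w)\cap\denot{\phi}_{\M}$, i.e.\ once more to the same coincidence lemma, now for $\phi'$ at successors of $w$; so I expect~(\ref{form1}) to meet the identical obstacle and to be equally delicate for multi-agent $A$. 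By contrast~(\ref{form2}) should close cleanly, because the global update is constant over the common-knowledge class: for $w'\in R^C_A(w)$ we have $R^C_A(w')=R^C_A(w)$, hence $\M^+_{(w',\phi,A)}=\M^+_{(w,\phi,A)}$, which supplies the coincidence lemma for the $+$ case for free and lets the composition argument finish by routine intersection bookkeeping on the denotations.
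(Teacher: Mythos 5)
Your reduction is the same one the paper uses. The appendix proof of Proposition~\ref{prop20} establishes only~(\ref{form1}), and does so exactly as you propose: it reduces the equivalence to the claim that the iterated refinement $(\M^-_{(w,\phi,A)})^-_{(w,\phi',A)}$ and the merged refinement $\M^-_{(w,\phi\land[\phi]^-_A\phi',A)}$ coincide, via the case analysis on $R^-_a(v)$ that you describe; (\ref{eq1}) and~(\ref{form2}) are not argued at all. The difference is that you make explicit the step the paper passes over in silence: identifying $\denot{\phi'}_{\M^-_{(w,\phi,A)}}$ with $\denot{[\phi]^-_A\phi'}_{\M}$ on $R_a(w)\cap\denot{\phi}_\M$ is exactly your coincidence lemma, namely that $(\M^-_{(w,\phi,A)},v)\models\phi'$ iff $(\M^-_{(v,\phi,A)},v)\models\phi'$ for the relevant successors $v$ of $w$. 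You have correctly located the load-bearing step, and your analysis of it is sharper than the paper's.

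The gap in your proposal is that you stop at ``delicate'' without reaching a verdict, and the verdict matters: your suspicion is right, the coincidence fails for $|A|\ge 2$, and with it~(\ref{eq1}) as stated. Take $Ag=A=\{a,b\}$, $W=\{w,w',u\}$, $R_a$-classes $\{w,w'\},\{u\}$, $R_b$-classes $\{w\},\{w',u\}$, $V(p)=\{w,w'\}$, and $\psi=Kw_b\,p$. In $\M^-_{(w,p,A)}$ the class $R_b(w')=\{w',u\}$ is left untouched (since $w'\notin R_b(w)$), so $Kw_b\,p$ fails at $w'\in R^-_a(w)$ and the left-hand side of~(\ref{eq1}) is false at $w$; but $\M^-_{(w',p,A)}$ does split $R_b(w')$, so $(\M,w')\models[p]^-_A Kw_b\,p$ and the right-hand side is true at $w$. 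Hence no completion of your argument (or of the paper's) can succeed without restricting the claim. What survives is what you indicate: for singleton $A=\{a\}$ the refinements at $w$ and at $w'\in R_a(w)$ coincide because $R_a(w')=R_a(w)$, and for the $+$ operators $\M^+_{(w',\phi,A)}=\M^+_{(w,\phi,A)}$ for all $w'\in R^C_A(w)$ since the update is keyed to the common-knowledge class. Even in the $+$ case, though, your ``routine intersection bookkeeping'' for~(\ref{form2}) needs one more look: on $R^C_A(w)\cap\denot{\neg\phi}_\M$ the merged refinement cuts $R_a(v)$ down by $\denot{\neg(\phi\land[\phi]^+_A\phi')}_\M$, which strictly contains $\denot{\neg\phi}_\M$ in general, whereas the iterated refinement only cuts by $\denot{\neg\phi}_\M$; such worlds may still be reachable from $w$ via agents outside $A$, so the two models need not be identical and truth-equivalence at $w$ must be argued directly.
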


On the other hand, formulas~(\ref{eq1})-(\ref{form2}) do not hold if
the coalition appearing in the announcement operator is different from
the coalition appearing in the epistemic operator.

Given that operators $[\phi]^-_A$ and $[\phi]^+_A$ are not reducible,
it is of interest to investigate what kind of modalities they are,
specifically what modal principles their semantics validates. First, it
is easy to see that both axiom {\bf K} and rule {\bf Nec} of
necessitation are valid:
\begin{eqnarray*}
[\phi]^-_A(\psi \to \psi')  \to  ([\phi]^-_A \psi \to [\phi]^-_A \psi') \\
\psi \Rightarrow [\phi]^-_A \psi 
\end{eqnarray*}
and the same validities hold for operator $\left[ \phi \right]_A^+$.

On the other hand, all axioms {\bf T}, {\bf 4} and {\bf B} fail.  As
regards {\bf T}, if $\phi$ is false, then $[\phi]_a \psi$ holds for
any formula $\psi$, but it does not follow that $\psi$ holds whenever
it is false itself.  As to {\bf 4}, notice that in the muddy children
puzzle a child not stepping forward is tantamount to globally stating
that she does not know whether she is muddy, or $[\textit{no\_st}]
:= [\bigwedge_{i \in Ag} \neg Kw_i m_i]^+_{Ag}$. Hence, after the
father's announcement, in state $(1,1,0)$ we have that no child knows
whether she is muddy after the first round, that is, $(1,1,0) \models
[no\_st] \bigwedge_{i \in Ag} \neg Kw_i m_i$.  However, at the
second round all muddy children know that they are muddy: $(1,1,0)
\models [no\_st] [no\_st] \bigwedge_{i \in Ag} (m_i \to Kw_i
m_i)$.  In particular, $(1,1,0) \not \models [no\_st][no\_st]
\bigwedge_{i \in Ag} \neg Kw_i m_i$, thus invalidating {\bf 4}.
As regards {\bf B}, we can show that it fails by considering Moore's formula $p \land \neg K_a p$ and a pointed model $(\M, w)$ such that 
$(\M, w) \models p \land \neg K_a p$ but $(\M, w) \not \models [p]_a
\langle p \rangle_a (p \land \neg K_a p)$.

\section{GLAL, PAL, and Epistemic Logic}

The main result in this section is that GLAL, differently from PAL, is
not reducible to epistemic logic, but rather strictly more expressive
than both.
%
In Section~\ref{comparison} we proved that 
GLAL is at least as expressive as public
announcement logic. 
Next we prove that GLAL is strictly more expressive,
in the sense that some formulas in GLAL are not equivalent to
any epistemic formula. Since epistemic logic and PAL are equally
expressive \cite{baltagetal:1998}, it immediately follows that GLAL is
strictly more expressive than PAL as well.
%
\begin{theorem} \label{theor1}
GLAL is strictly more expressive than epistemic logic with common knowledge.
\end{theorem}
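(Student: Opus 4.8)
The plan is to prove strict inequality by exhibiting a single GLAL formula that separates two pointed models which agree on all epistemic (indeed all PAL) formulas. Since epistemic logic with common knowledge is invariant under standard modal bisimulation, it suffices to produce two \emph{$m$-bisimilar} pointed models $(\N, w_1)$ and $(\N', v'_1)$ together with a GLAL formula $\chi$ such that $(\N, w_1) \models \chi$ but $(\N', v'_1) \not\models \chi$. Fortunately Example~\ref{ex:bisim} already does almost all of the work: the two models $\N$ and $\N'$ there are explicitly stated to be $m$-bisimilar (witnessed by the relation pairing $w_1$ with both $v'_1$ and $w'_1$, and $w_2$ with both $v'_2$ and $w'_2$), yet they are distinguished by a local announcement to the single agent $r$.

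First I would fix the separating formula. Taking $\chi := [bit = 0]_{r}\, K_e Kw_r (bit = 0)$ works: Example~\ref{ex:bisim} establishes that $(\N, w_1) \models [bit=0]_{r}(\neg Kw_e(bit=0) \land K_e Kw_r(bit=0))$, so in particular $(\N, w_1) \models \chi$, whereas $(\N', w'_1) \models [bit=0]_{r} \neg K_e Kw_r(bit=0)$, i.e.\ $(\N', w'_1) \not\models \chi$. I would first verify these two satisfaction facts directly from Definition~\ref{def:semantics} by computing the single-agent refinements $\N_{(w_1, bit=0, r)}$ and $\N'_{(w'_1, bit=0, r)}$ (the updated models are drawn in the example), checking that after the announcement $e$ still connects a $0$-world to a $1$-world, and that in the updated $\N'$ the world $w'_1$ is $e$-connected to $v'_1$, where $r$ has \emph{not} learnt the value of the bit. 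Since Example~\ref{ex:bisim} was stated for the point $v'_1$ of $\N'$ being $m$-bisimilar to $w_1$ of $\N$, I would be careful to phrase everything at the point $v'_1$ (not $w'_1$): because $v'_1$ and $w'_1$ are themselves $m$-bisimilar in $\N'$, the failure of $\chi$ at $w'_1$ transfers, but I would note that it is cleaner to argue $(\N', v'_1) \not\models \chi$ directly via the $e$-edge from $v'_1$ to a world where $r$ stays uncertain.

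Next I would assemble the separation argument. Suppose for contradiction that some epistemic formula $\theta \in \L_{el}$ were equivalent to $\chi$ over all pointed models. Then $(\N, w_1) \models \theta$ would imply $(\N', v'_1) \models \theta$, since $\theta$ is preserved by the $m$-bisimulation between $(\N, w_1)$ and $(\N', v'_1)$ (this is the standard invariance of epistemic logic with common knowledge under modal bisimulation, which I would cite to \cite{BlackburndRV01}). But equivalence with $\chi$ would force $(\N, w_1) \models \theta$ and $(\N', v'_1) \not\models \theta$, a contradiction. Hence no epistemic formula is equivalent to $\chi$, so GLAL is strictly more expressive than $\L_{el}$. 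Combined with the embedding of PAL into GLAL (Corollary~\ref{prop1}) and the fact that PAL and epistemic logic are equally expressive \cite{baltagetal:1998}, strict expressivity over PAL follows as well.

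The main obstacle is the invariance lemma I am leaning on: I must be sure that common knowledge (not just individual knowledge) is genuinely preserved by standard modal bisimulation on these finite models. This is true because $C_A$ is definable from the individual $K_a$ via the reflexive-transitive closure, and modal bisimulations are closed under composition along $\bigcup_{a \in A} R_a$, so bisimilar points satisfy the same $C_A$-formulas; I would state this carefully (or invoke the well-known result for logics with common knowledge over image-finite models) rather than leave it implicit. The only other point requiring care is confirming that the pairing I claim really is an $m$-bisimulation between $\N$ and $\N'$ -- i.e.\ checking the back-and-forth conditions for every agent $s, r, e$ at each matched pair -- but this is a routine finite check on the two displayed models and is exactly the claim already asserted in Example~\ref{ex:bisim}.
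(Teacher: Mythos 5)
Your proposal is correct and follows essentially the same route as the paper's own proof: both use the $m$-bisimilar models $\N$ and $\N'$ of Example~\ref{ex:bisim} with the bisimulation pairing $w_i$ to both $w'_i$ and $v'_i$, and both separate them with the formula $[bit=0]_r K_e Kw_r(bit=0)$, concluding via bisimulation-invariance of epistemic logic with common knowledge (the paper evaluates the failure at $w'_1$ rather than $v'_1$, but as you note either point works). Your added care about why $C_A$ is preserved under $m$-bisimulation is a reasonable elaboration of a step the paper leaves implicit, not a departure from its argument.
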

%
\begin{proofsk}
We prove this result by providing two modally bisimilar models, that
therefore satisfy the same epistemic formulas, but satisfy different
formulas in GLAL.
Consider models $\N$ and 
$\N'$
in Example~\ref{ex:bisim} 
and define a relation $B$ such that $B(w_i, w'_i)$ and $B(w_i, v'_i)$, for $i \in \{1, 2\}$.
It is easy to check that the $B$ is a modal bisimulation between
pointed models $(\N, w_1)$ and $(\N', w'_1)$. In particular, the same
epistemic formulas are satisfied at states $w_1$ and $w'_1$.
However, as we noticed in Example~\ref{ex:bisim},
for $\phi ::= [bit = 0]_r K_e Kw_r (bit = 0)$, we can check
$(\N, w_1) \models \phi$; while $(\N', w'_1) \not \models \phi$.  In
particular, there is no epistemic formula in $\L_{el}$ that is
equivalent to $\phi$.
%
\end{proofsk}


By Theorem~\ref{theor1} and the equi-expressivity of epistemic logic
and PAL \cite{baltagetal:1998}, we immediately obtain the following corollary.
\begin{corollary} 
GLAL is strictly more expressive than PAL.
\end{corollary}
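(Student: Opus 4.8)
The claim \emph{GLAL is strictly more expressive than PAL} unpacks into two sub-claims: that GLAL is at least as expressive as PAL, and that the two are not equi-expressive, i.e.\ some GLAL formula has no PAL-equivalent. The plan is to obtain the first sub-claim from the embedding already established in this section, and the second by transporting Theorem~\ref{theor1} across the cited equi-expressivity of PAL and epistemic logic.

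For the first sub-claim I would simply invoke Proposition~\ref{prop10} (equivalently Corollary~\ref{prop1}): the translation $tr:\L_{pal}\to\L_{glal}$ sending $[\phi]\phi'$ to $[tr(\phi)]^+_{Ag}tr(\phi')$ satisfies $(\M,w)\models\psi$ iff $(\M,w)\models tr(\psi)$ for every PAL formula $\psi$ and every pointed model $(\M,w)$. Hence each PAL formula is logically equivalent to a GLAL formula, so GLAL is at least as expressive as PAL.

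For the second (strictness) sub-claim I would reuse the two pointed models $(\N,w_1)$ and $(\N',w'_1)$ of Example~\ref{ex:bisim} together with the distinguishing GLAL formula $\phi := [bit=0]_r K_e Kw_r(bit=0)$ already used for Theorem~\ref{theor1}: these models are modally bisimilar via the relation $B$ with $B(w_i,w'_i)$ and $B(w_i,v'_i)$, yet $(\N,w_1)\models\phi$ while $(\N',w'_1)\not\models\phi$. By the cited equi-expressivity of PAL and epistemic logic with common knowledge~\cite{baltagetal:1998}, every PAL formula is equivalent to some formula of $\L_{el}$, and formulas of $\L_{el}$ are invariant under modal bisimulation; consequently no PAL formula can separate the bisimilar pointed models $(\N,w_1)$ and $(\N',w'_1)$. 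Since $\phi$ does separate them, $\phi$ is equivalent to no PAL formula, and together with the first sub-claim this yields strictness.

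The substantive work is entirely upstream, in Theorem~\ref{theor1} and in the cited reduction of PAL to epistemic logic, so no new model construction or computation is required beyond what Example~\ref{ex:bisim} already supplies. The only point that must be handled with care is the \emph{direction} in which the prior results are chained: the embedding $tr$ gives the easy inequality, while strictness relies on reading the PAL/EL equi-expressivity as the statement that PAL formulas inherit bisimulation-invariance from $\L_{el}$, which is exactly what prevents them from distinguishing $(\N,w_1)$ from $(\N',w'_1)$ and thereby forces GLAL to be strictly more expressive.
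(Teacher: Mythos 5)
Your proposal is correct and follows essentially the same route as the paper: the paper derives the corollary in one line by chaining Theorem~\ref{theor1} (GLAL strictly exceeds epistemic logic, witnessed by the $m$-bisimilar models of Example~\ref{ex:bisim} and the formula $[bit=0]_r K_e Kw_r(bit=0)$) with the cited equi-expressivity of PAL and epistemic logic, exactly as you do. Your write-up merely makes explicit the two ingredients the paper leaves implicit --- the embedding $tr$ for the ``at least as expressive'' half and the bisimulation-invariance of PAL formulas inherited from $\L_{el}$ for strictness --- so no substantive difference remains.
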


By Example~\ref{ex:bisim} and the proof of Theorem~\ref{theor1} not even announcements to single
agents are reducible to epistemic formulas. 
Also, the same proof points out that a more robust notion of
bisimulation is needed to preserve formulas in GLAL. We explore such a
notion in the next section.

 \label{expressivity}

\section{Bisimulations} \label{bisimulations}

In this section we investigate a stronger notion of bisimulation
capable of preserving the meaning of GLAL formulas as well.  Firstly,
for any set $A \subseteq \Agents$ of agents, and model $\M = \langle
W, \{R_a\}_{a \in Ag}, V \rangle$, we define $R_A(w,v)$ as: $R_a(w,v)$
iff $a \in A$, that is, $R_A(w,v)$ holds iff $R_a(w,v)$ holds for
exactly the agents $a \in A$.
\begin{definition}[$\pm$-Simulation] \label{def:bisim}
 Given models $\M$ and $\M'$,
a {\em $\pm$-simulation} is a relation ${\mathbf S} \subseteq
W \times W'$ such that ${\mathbf S}(w, w')$ implies
\begin{enumerate}
\item[] {\bf Atoms \ }  $w \in V(p)$ iff $w' \in V'(p)$, for every $p \in AP$;
\item[] {\bf Forth \ } for every $A \subseteq \Agents$ and $v \in W$,  if $R_A(w,v)$ then for some $v' \in W'$, $R'_A(w',v')$ and ${\mathbf S}(v,v')$;
\item[] {\bf Reach \ } if ${\mathbf S}(v,v')$ then for every $a \in Ag$, $R_a(w,v)$ iff  $R'_a(w',v')$.
\end{enumerate}
\end{definition}


Besides conditions {\bf Atoms} and {\bf Forth},  {\bf Reach} is required to preserve the interpretation of formulas when refinements are considered.   
A {\em $\pm$-bisimulation} is a relation ${\mathbf B} \subseteq
W \times W'$ such that both ${\mathbf B}$ and ${\mathbf B}^{-1} = \{
(w',w) \mid {\mathbf B}(w,w') \}$ are simulations.
Two worlds $w$, $w'$ are $\pm$-bisimilar, or $w \rightleftharpoons w'$, iff
${\mathbf B}(w,w')$ holds for some $\pm$-bisimulation ${\mathbf
B}$. Further, two models $\M$ and $\M'$ are $\pm$-bisimilar, or
$\M \rightleftharpoons \M'$, iff
\myi for every $w \in \M$, $w \rightleftharpoons w'$ for some $w' \in \M'$; and \myii for every $w' \in \M'$, $w \rightleftharpoons w'$ for some $w \in \M$. 
We write $(\M,w) \rightleftharpoons (\M', w')$ to state that $w$
and $w'$ are $\pm$-bisimilar in $\pm$-bisimilar $\M$, $\M'$.

Clearly, by Def.~\ref{def:bisim}  $\pm$-bisimilarity
implies $m$-bisimilarity.
However, the opposite implication
does not hold. Notably, the $m$-bisimilar models $\N$ and $\N'$ in
Example~\ref{ex:bisim} are not $\pm$-bisimilar according to
Def.~\ref{def:bisim}.
We now show that global and local announcements preserve bisimilarity.
\begin{theorem}\label{thm:bisimpreservation}
Suppose that $(\M,s) \rightleftharpoons (\M', s')$. 
Then, for every formula $\psi$ in GLAL 
\begin{eqnarray*}
 (\M, s) \models \psi & \text{iff} & (\M' , s') \models \psi 
\end{eqnarray*}
In particular, if $\psi = [\theta]^-_{A} \theta'$ (resp.~$\psi =
[\theta]^+_{A} \theta'$) then 
${\M}^-_{(s,\theta,A)} \rightleftharpoons \M'^-_{(s',\theta, A)}$ and  ${\M}^+_{(s,\theta,A)} \rightleftharpoons \M'^+_{(s',\theta,A)}$.
\end{theorem}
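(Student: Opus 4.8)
The plan is to prove the biconditional by induction on the structure of $\psi$, with the inductive statement universally quantified over all $\pm$-bisimilar pointed models: for every pair $(\M,s) \rightleftharpoons (\M',s')$ and every formula $\chi$, $(\M,s)\models\chi$ iff $(\M',s')\models\chi$. Fix a $\pm$-bisimulation $\mathbf{B}$ witnessing $(\M,s)\rightleftharpoons(\M',s')$; note that every pair $(v,v')\in\mathbf{B}$ is itself $\pm$-bisimilar via $\mathbf{B}$, so the induction hypothesis applies at each related pair. The atomic case is condition \textbf{Atoms} and the Boolean cases are immediate. For $\chi=C_A\psi$ I would use iterated \textbf{Forth} to lift any $A$-path from $s$ to a world $w\in R^C_A(s)$ to a matching $A$-path from $s'$ ending in some $w'\in R'^C_A(s')$ with $\mathbf{B}(w,w')$ (and symmetrically via $\mathbf{B}^{-1}$); together with the induction hypothesis on $\psi$ this equates the two satisfaction conditions for $C_A$.

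The difficulty is concentrated in the two announcement cases, and these I would reduce to a single \emph{refinement lemma}: if $\theta$ is preserved by $\mathbf{B}$ (which holds by the induction hypothesis on the subformula $\theta$, since $\mathbf{B}$ relates each of its pairs bisimilarly), then the very same relation $\mathbf{B}$ is again a $\pm$-bisimulation between the refinements, so that $(\M^-_{(s,\theta,A)},s)\rightleftharpoons(\M'^-_{(s',\theta,A)},s')$, and likewise for $+$. Granting this lemma, the case $\chi=[\theta]^-_A\theta'$ closes at once: by the induction hypothesis on $\theta$ we get $s\in\denot{\theta}_{\M}$ iff $s'\in\denot{\theta}_{\M'}$, and by the induction hypothesis on $\theta'$ applied to the bisimilar refinements we get $s\in\denot{\theta'}_{\M^-_{(s,\theta,A)}}$ iff $s'\in\denot{\theta'}_{\M'^-_{(s',\theta,A)}}$; combining the two yields the equivalence, and symmetrically for $[\theta]^+_A\theta'$.

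To prove the refinement lemma I would verify \textbf{Atoms}, \textbf{Forth} and \textbf{Reach} for $\mathbf{B}$ on the refined models. \textbf{Atoms} is free since $V$ is unchanged. The crux is \textbf{Reach}: for $(w,w'),(v,v')\in\mathbf{B}$ I must show $R^-_a(w,v)$ iff $R'^-_a(w',v')$ for every $a$. For $a\notin A$ this is the original \textbf{Reach}. For $a\in A$ I would unfold the three-way definition of the refined relation; which branch applies at $w$ is governed by whether $w$ lies in $R_a(s)$ (case $-$) or in $R^C_A(s)$ (case $+$) together with the truth value of $\theta$ at $w$, while whether the edge to $v$ survives is governed by the truth value of $\theta$ at $v$. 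The three facts that force the branches at $(w,v)$ and $(w',v')$ to coincide are: original \textbf{Reach} (so single edges agree), the induction hypothesis on $\theta$ (so the $\theta$-membership of $w,v$ matches that of $w',v'$), and the preservation of the relevant reachability set from $s$. Once refined \textbf{Reach} is in hand, \textbf{Forth} follows cheaply: given an edge of the refined model, apply the original \textbf{Forth} to obtain a $\mathbf{B}$-partner $v'$, and refined \textbf{Reach} guarantees that the agents connecting $w'$ to $v'$ in the refinement are exactly those connecting $w$ to $v$.

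I expect the main obstacle to be the last of these three facts in the $+$ case, namely that $\pm$-bisimulation preserves membership in the common-knowledge class: for all $(s,s'),(w,w')\in\mathbf{B}$ one needs $w\in R^C_A(s)$ iff $w'\in R'^C_A(s')$, and moreover for \emph{every} such partner $w'$, not merely for some lifted one. The plan is to prove this by induction on the length of an $A$-path from $s$ to $w$. The base case $w=s$ is where reflexivity is essential: by \textbf{Reach} with root $(s,s')$ and pair $(s,w')$, the reflexive edges $R_a(s,s)$ force $R'_a(s',w')$ for all $a$, so every $\mathbf{B}$-partner of $s$ sits in $R'^C_A(s')$. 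In the inductive step, given the last edge $u_{n-1}\mathrel{R_{a_n}}w$ with $a_n\in A$, I would apply \textbf{Forth} at $(w,w')$ to produce a partner $u'_{n-1}$ of $u_{n-1}$ with $R'_{a_n}(w',u'_{n-1})$; the induction hypothesis places $u'_{n-1}$ in $R'^C_A(s')$, and transitivity of $R^C_A$ then places $w'$ there too. The converse direction, and the $-$ analogue (where the simpler fact $w\in R_a(s)$ iff $w'\in R'_a(s')$ comes from a single application of \textbf{Reach}), follow by the same argument applied to $\mathbf{B}^{-1}$.
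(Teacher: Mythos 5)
Your proposal is correct and follows essentially the same route as the paper's proof: induction on formula structure, with the key step being that the witnessing relation $\mathbf{B}$ is itself a $\pm$-bisimulation between the two refinements, verified through \textbf{Atoms}, \textbf{Forth} and \textbf{Reach} with a case split on membership in $R^E_A(s)$ (resp.\ $R^C_A(s)$) and on agreement about $\theta$. The only place you are more explicit than the paper is the sublemma that $\mathbf{B}$ preserves membership in the common-knowledge class $R^C_A(s)$, which you prove by induction on path length and which the paper dispatches with ``the proof is similar'' for the global case.
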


\paragraph*{Discussion: Distributed knowledge}
There is a close relation between our notion of $\pm$-bisimulation and
the notion known in the literature as {\em collective
bisimulation} \cite{DBLP:journals/jancl/Roelofsen07,DBLP:journals/synthese/WangA13}. The
(only) difference is that in Def.~\ref{def:bisim} we require $R_A$ to
be exact ($(w,v)\in R_A$ iff: $(w,v) \in R_a$ for $a \in A$ and
$(w,v) \not\in R_a$ for $a \not\in A$) whereas in collective
bisimulation $R_A$ is taken inclusively ($(w,v)\in R_A$ iff:
$(w,v) \in R_a$ for $a \in A$). Clearly, $\pm$-bisimilar models are
also collectively bisimilar, so $\pm$-bisimilarity also implies
equivalence in the logic of distributed knowledge. However, the other
way round may not hold.
As an example, consider again models $\N$ and $\N'$ in
Example~\ref{ex:bisim}. These are collectively bisimilar, as well as
$m$-bisimilar. However, they are not $\pm$-bisimilar.  In particular,
it is easy to see that $(\N, w_1) \models [bit = 0]_r K_e D_{r,e} (bit
= 0)$; whereas $(\N', w'_1) \not \models [bit = 0]_r K_e D_{r,e} (bit
= 0)$. As a consequence, we obtain the following result:
\begin{theorem}
Epistemic logic with distributed
knowledge is not as expressive as GLAL.  
\end{theorem}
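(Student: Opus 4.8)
The plan is to prove the theorem exactly as anticipated in the discussion above: exhibit two models that are \emph{collectively bisimilar} --- and hence, by the known preservation result for collective bisimulation \cite{DBLP:journals/jancl/Roelofsen07,DBLP:journals/synthese/WangA13} recalled just before the theorem, satisfy the very same formulas of epistemic logic with distributed knowledge --- but that are separated by a GLAL formula. A separating GLAL formula then admits no equivalent in the logic of distributed knowledge, which is precisely the claim. I would reuse the models $\N$ and $\N'$ of Example~\ref{ex:bisim}, whose failure of $\pm$-bisimilarity was already noted after Definition~\ref{def:bisim}.

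First I would make the collective bisimulation explicit. Define $B \subseteq W \times W'$ by relating $w_1$ to both $w'_1$ and $v'_1$, and $w_2$ to both $w'_2$ and $v'_2$; this respects the atom $bit$, so \textbf{Atoms} holds. I would then check that $B$ and $B^{-1}$ satisfy the collective-simulation clauses, i.e.\ the analogues of \textbf{Forth} and \textbf{Back} in which, for every coalition $A \subseteq Ag$, the relation $R_A$ is read \emph{inclusively} as $\bigcap_{a \in A} R_a$. Since both models are finite and there are only finitely many coalitions, this reduces to a short case analysis over the relations $R_{\{s\}}, R_{\{r\}}, R_{\{e\}}, R_{\{r,e\}}, R_{\{s,e\}}, R_{\{s,r\}}$ and $R_{Ag}$. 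Granting the preservation result, it follows that $(\N, w_1)$ and $(\N', w'_1)$ agree on every distributed-knowledge formula, so no such formula separates the two pointed models.

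It then remains to separate them by a GLAL formula, which captures the genuinely different effect of announcing $bit=0$ to $r$: in $\N$ the eavesdropper comes to know that the bit value is jointly pinned down by $r$ and $e$, whereas in $\N'$ she does not, since she considers possible the states $v'_1, v'_2$ in which $r$ has not learnt the bit. Concretely I would take $\phi := [bit=0]_r K_e (D_{r,e}(bit=0) \lor D_{r,e}\neg(bit=0))$, for which $(\N, w_1) \models \phi$ while $(\N', w'_1) \not\models \phi$; the plainly epistemic witness $[bit=0]_r K_e Kw_r(bit=0)$ from the proof of Theorem~\ref{theor1} separates the two pointed models as well, being itself a GLAL formula. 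Either way, $\phi$ is not equivalent to any formula in the logic of distributed knowledge, which establishes the theorem.

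The main obstacle is the verification in the second step, precisely because these same models fail to be $\pm$-bisimilar. The delicate coalition is $\{s,e\}$: in $\N'$ the inclusive relation $R_{\{s,e\}}$ links $w'_1$ to $v'_1$, so the \textbf{Back} clause demands a matching inclusive $\{s,e\}$-successor of $w_1$ in $\N$. Under the inclusive reading one has $R_{\{s,e\}}(w_1) = \{w_1\}$, and the reflexive pair at $w_1$ supplies the match via $B(w_1, v'_1)$, closing the square. Under the \emph{exact} reading required by $\pm$-bisimulation, by contrast, $w_1$ has no $\{s,e\}$-exact successor matching the $\{s,e\}$-exact edge $w'_1 \to v'_1$ of $\N'$ --- which is exactly why $\pm$-bisimilarity fails while collective bisimilarity, and thus equivalence in distributed-knowledge logic, survives. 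Confirming that no other coalition produces such a mismatch completes the argument.
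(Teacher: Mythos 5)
Your proposal is correct and takes essentially the same route as the paper: it exhibits the collectively bisimilar (hence distributed-knowledge-equivalent) models $\N$ and $\N'$ of Example~\ref{ex:bisim} and separates them by an announcement formula, which is exactly the paper's argument. Your only deviations are improvements in rigour: the knowing-whether variant $K_e(D_{r,e}(bit=0) \lor D_{r,e}\neg(bit=0))$ repairs the paper's witness $[bit=0]_r K_e D_{r,e}(bit=0)$ (which in fact fails at $(\N,w_1)$ because $D_{r,e}(bit=0)$ is false at $w_2$ after the update), and falling back on the pure GLAL formula $[bit=0]_r K_e Kw_r(bit=0)$ correctly avoids using the $D$ operator, which is not in the GLAL syntax, as the separating witness.
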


This begs the question as to how logics with distributed knowledge
relate to GLAL and, for example, what their relative expressivity is.
For instance, we do not know whether GLAL is strictly more expressive
or the two logics are uncomparable.

\section{Model Checking and Satisfaction} \label{modelcheck}

As part of the analysis of the formal properties of GLAL, we investigate the model checking and satisfiability problems, defined as follows.
\begin{definition}[Model Checking and Satisfiability]
\begin{itemize}
\item[]
\item {\bf Model Checking Problem}:
 Given a finite model $\M$, state $w$ in $\M$, and formula $\phi$ in GLAL, determine
 whether $(\M, w) \models \phi$.
\item {\bf Satisfiability Problem}:
 Given a formula $\phi$ in GLAL, determine whether $(\M, w) \models
  \phi$ for some model $\M$ and state $w$ in $\M$.
\end{itemize}
\end{definition}

The model checking problem is tantamount to determine whether $w \in
[\![\phi]\!]_{\M}$, hence it depends crucially on the complexity of
computing the satisfaction set $[\![\phi]\!]_{\M}$, as membership is
supposed to be computable in polynomial time.
\begin{theorem} \label{mcheck}
  The model checking problem for GLAL is PTIME-complete.
\end{theorem}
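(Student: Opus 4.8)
The plan is to establish the two halves of PTIME-completeness separately, treating model checking as the computation of the satisfaction set $\denot{\phi}_{\M}$, after which membership of $w$ is a table lookup. \textbf{Upper bound.} I would proceed by induction on the structure of $\phi$, showing that $\denot{\phi}_{\M}$ is computable in time polynomial in $|\M|$ and $|\phi|$. The Boolean cases are complementation and intersection of previously computed sets. For $C_A\psi$, given $\denot{\psi}_{\M}$, I would compute the closure $R^C_A=(\bigcup_{a\in A}R_a)^*$ once by a transitive-closure computation and then take $\denot{C_A\psi}_{\M}=\{w\mid R^C_A(w)\subseteq\denot{\psi}_{\M}\}$, which is polynomial.

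The delicate cases are the two announcement operators, and this is where the main obstacle lies. Since refinements are \emph{pointed}, the model $\M^-_{(w,\psi,A)}$ (resp.\ $\M^+_{(w,\psi,A)}$) in which $\psi'$ must be evaluated depends on the evaluation world $w$; naively this forces a recomputation of $\denot{\psi'}$ for every $w$, and nested announcements threaten to unfold the computation into a tree of models of exponential size. The first observation that contains this is that, for fixed $\psi$ and $A$, the map $w\mapsto\M^-_{(w,\psi,A)}$ depends on $w$ only through the tuple $(R_a(w))_{a\in A}$ of equivalence classes---by Def.~\ref{def:semantics} only these classes are split---so at most $|W|$ distinct refinements arise as $w$ ranges over $W$, each computable in polynomial time and each again a legal model by Proposition~\ref{prop11}. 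Grouping worlds by their refinement, I would compute $\denot{\psi'}$ once per distinct refinement and assemble $\denot{[\psi]^-_A\psi'}_{\M}=\{w\mid w\in\denot{\psi}_{\M}\Rightarrow w\in\denot{\psi'}_{\M^-_{(w,\psi,A)}}\}$ from the per-refinement sets.

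The genuinely hard step---the technical heart of the proof---is to guarantee that the \emph{total} number of distinct refined models produced across the whole recursion stays polynomially bounded, so that a single global memoization table indexed by (subformula, model) pairs is filled in polynomial time. I would maintain such a table and argue that the closure of $\M$ under the (at most $|W|$) refinements induced by each of the $\le|\phi|$ announcement-subformulas does not blow up: here I expect to exploit the finiteness of the agent set $Ag$ together with the reduction validities (\ref{form1}) and (\ref{form2}) of Proposition~\ref{prop20}, which collapse chains of same-coalition announcements into a single one, thereby limiting the effective interleaving of distinct refinements. Controlling this interaction between nested pointed refinements and the knowledge operators is the crux on which polynomiality rests.

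\textbf{Lower bound.} For PTIME-hardness I would give a LOGSPACE reduction from a PTIME-complete problem, e.g.\ the Monotone Circuit Value Problem (equivalently Horn-SAT or alternating graph reachability). The circuit is encoded as a finite model, and its layer-by-layer evaluation is simulated by nested announcements whose depth matches the circuit depth: each announcement refines the model so as to propagate the truth value computed at one layer, while the knowledge operators aggregate the inputs of a gate (universally for $\land$-gates, existentially via the duals $M_a$ for $\lor$-gates). The output gate then evaluates to $1$ iff a GLAL formula of size linear in the circuit holds at a designated world. The fiddly point of this direction is to realise the gate gadgets while respecting the constraint (Def.~\ref{def:kf}) that every $R_a$ be an equivalence relation, which I would handle by using auxiliary worlds so that each wire is carried by a two-element equivalence class and distinct agents/announcements separate successive layers. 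Combining the two bounds yields PTIME-completeness.
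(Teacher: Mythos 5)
On the lower bound you take a genuinely different and unnecessarily heavy route. The paper simply inherits PTIME-hardness from model checking epistemic logic with common knowledge, which is already a fragment of GLAL ($\L_{el}\subseteq\L_{glal}$), so no announcement gadgets are needed at all; your direct reduction from the Monotone Circuit Value Problem would have to fight the equivalence-relation constraint for no benefit, and any slip in the gate gadgets would be a self-inflicted wound. For the upper bound, you have correctly isolated the real difficulty --- that refinements are \emph{pointed}, so announcements nested under epistemic operators spawn a tree of models with branching factor up to $|W|$ per alternation of a $C_A$/$K_a$ with an announcement, threatening a bound of order $|W|^{d}$ for nesting depth $d$. (The paper's own proof does not engage with this at all: it only observes that a single refinement is polynomial-time computable and that epistemic formulas are polynomial-time checkable, and asserts the combination is polynomial.)

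The gap is that your proposed mechanism for containing the blowup does not work as stated. Your first observation (grouping worlds by the profile $(R_a(w))_{a\in A}$, giving at most $|W|$ refinements per announcement occurrence \emph{per current model}) is sound, but it only bounds the branching factor, not the total number of distinct models accumulated across the recursion. The tools you invoke to bound that total --- the validities (\ref{form1}) and (\ref{form2}) of Proposition~\ref{prop20} --- cannot do the job: they apply only to \emph{directly} nested announcements by the \emph{same} coalition, whereas the exponential branching arises precisely when announcements are separated by epistemic operators (e.g.\ $[\psi_1]^-_A K_a [\psi_2]^-_B\chi$ forces, for each $v\in R^-_a(w)$, a further refinement pointed at $v$); moreover those validities rewrite $[\phi]^-_A[\phi']^-_A\psi$ into an announcement of the \emph{longer} formula $\phi\land[\phi]^-_A\phi'$, so they do not shrink the instance. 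A memoization table indexed by (subformula, model) pairs is only polynomial if one first proves the set of reachable models is polynomial, which is exactly what is missing: each derived relation is an intersection of original classes with sets $\denot{\psi_i}_{\M_j}$ whose indices range over the very models being counted, so the bound is circular. To close this you would need a genuinely new argument (for instance, that along any root-to-leaf path of the syntax tree the relevant model is determined by polynomially much data, or a reformulation of the semantics that evaluates all pointed refinements simultaneously); as written, the PTIME membership claim is not established.
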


%

A consequence of Theorem~\ref{mcheck} is that model checking GLAL is
no more computationally complex than the verification of epistemic
logic. Hence, the enhanced expressivity of GLAL comes at no extra computational cost from a verificational perspective.

\begin{theorem} \label{sat}
  The satisfiability problem for GLAL is decidable.
\end{theorem}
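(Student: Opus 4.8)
The plan is to establish decidability via a \emph{finite model property} with an effective bound, and then to combine it with the PTIME model-checking procedure of Theorem~\ref{mcheck}. Concretely, I would prove that every satisfiable $\phi \in \L_{glal}$ is satisfied in a model whose carrier has at most $g(|\phi|)$ worlds, for a computable function $g$. Given this, the decision procedure is routine: enumerate all (finitely many, up to isomorphism) Kripke models over $\le g(|\phi|)$ worlds with equivalence relations $\{R_a\}_{a\in Ag}$, and model-check $\phi$ on each of them; by Theorem~\ref{mcheck} each such check runs in polynomial time, so the whole procedure terminates. The structural fact that makes a bounded model property plausible is that refinements are \emph{domain-preserving}: by Definition~\ref{def:semantics} we always have $W^- = W^+ = W$ and $V^- = V^+ = V$, and the refined relations are sub-relations of the originals. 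Hence the entire ``tree of models'' generated by evaluating $\phi$ at $(\M,w)$ lives over one fixed carrier $W$ with one fixed valuation, the only thing varying being which edges survive; over a finite $W$ there are only finitely many such edge-sets, which already explains why evaluation terminates on finite models.

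The engine of the bound is a filtration argument. First I would fix a finite closure set $\Sigma = \mathsf{cl}(\phi)$ containing $\phi$ and closed under subformulas, where announcement-prefixed subformulas are normalised using the validities available in the paper: the propositional reduction schemes ($[\phi]^-_A p \leftrightarrow \phi \to p$, and the clauses for $\neg$ and $\land$), the epistemic commutation~(\ref{eq1}) of Proposition~\ref{prop20} (together with its evident counterpart for $C_A$ and for the $+$ operator), and the nested-announcement laws~(\ref{form1})--(\ref{form2}). These schemes push each announcement strictly inward past Boolean and same-coalition operators and fuse same-coalition nested announcements, so a multiset ordering on the depths of announced formulas shows the rewriting terminates and $\Sigma$ is finite, with $|\Sigma|$ computably bounded in $|\phi|$. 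Crucially, by Theorem~\ref{theor1} the announcements cannot be eliminated, so $\Sigma$ must retain the \emph{irreducible} formulas $[\psi]^{\pm}_A\chi$ in which $\chi$ is a cross-coalition epistemic or cross-coalition announcement formula; the filtration will have to treat these semantically rather than by rewriting. I would then quotient by $\Sigma$-equivalence, setting $w \equiv_\Sigma w'$ iff $w,w'$ satisfy the same formulas of $\Sigma$, take $W_f = W/{\equiv_\Sigma}$ (so $|W_f|\le 2^{|\Sigma|}$), and define the filtrated equivalence relations in the standard S5 manner, handling common knowledge $C_A\chi \in \Sigma$ through the usual unfolding closure (including $\bigwedge_{a\in A}K_a C_A\chi$), which guarantees $R^C_A$ is captured one step at a time.

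The heart of the proof — and the step I expect to be the main obstacle — is a \emph{filtration lemma stating that filtration commutes with refinement}: for $[\psi]^{\pm}_A\chi \in \Sigma$ one must show $(\M_f)^{\pm}_{([w],\psi,A)}$ is isomorphic (or at least $\pm$-bisimilar, so that Theorem~\ref{thm:bisimpreservation} applies) to the filtration of $\M^{\pm}_{(w,\psi,A)}$. The delicate point is that the refinement in Definition~\ref{def:semantics} is defined using the satisfaction set $\denot{\psi}_{\M}$ and, for the global case, the class $R^C_A(w)$ of the \emph{actual} world; both data must descend to the quotient. Since $\psi \in \Sigma$, worlds in one $\equiv_\Sigma$-class agree on $\psi$, so $\denot{\psi}_{\M}$ is a union of classes and descends cleanly; the $R^C_A(w)$ dependence descends by the common-knowledge unfolding built into $\Sigma$. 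One then runs a sub-induction over the reduction order on $\Sigma$ to evaluate $\chi$ in the refined models, using domain-preservation so that the induction hypothesis transfers from $\M^{\pm}_{(w,\psi,A)}$ to its filtration. The genuine difficulty is the bookkeeping that keeps $\Sigma$ simultaneously finite (so the bound $2^{|\Sigma|}$ is effective) and rich enough that \emph{every} satisfaction set and actual-world class invoked by a nested announcement descends to the quotient; getting these two requirements to coexist, especially across alternating coalitions where no reduction validity is available, is where the argument must be made carefully.
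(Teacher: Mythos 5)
There is a genuine gap, and it sits exactly where you predicted: the filtration lemma at the heart of your argument is false, and the paper's own Example~\ref{ex:bisim} is the counterexample. In the model $\N'$ the map swapping $v'_i$ and $w'_i$ is an automorphism, so $w'_1$ and $v'_1$ satisfy exactly the same GLAL formulas; hence \emph{any} closure set $\Sigma$ yields $w'_1 \equiv_\Sigma v'_1$, and the quotient $W/{\equiv_\Sigma}$ has two classes. Since every filtration must put $[w'_1]\,R_{f,r}\,[w'_2]$ (because $w'_1 R_r w'_2$ holds upstairs), the filtrated model is exactly $\N$. But $(\N, w_1) \models [bit=0]_r K_e Kw_r(bit=0)$ while $(\N', w'_1) \not\models$ it, so the filtration changes the truth value of a formula you would certainly have placed in $\Sigma$. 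Your hedge ``isomorphic or at least $\pm$-bisimilar'' does not rescue this: the paper notes $\N \not\rightleftharpoons \N'$ (the \textbf{Reach} condition fails for the quotient map, e.g.\ at the pair $(w'_1, v'_2)$ versus $(w_1,w_2)$), so the filtration of $\N'$ is not $\pm$-bisimilar to $\N'$ and Theorem~\ref{thm:bisimpreservation} cannot be invoked. The underlying obstruction is structural: GLAL truth is not invariant under quotienting by logical equivalence (this is essentially the content of Theorem~\ref{theor1}), so no filtration that identifies $\Sigma$-equivalent worlds can preserve GLAL satisfaction. A secondary weakness is that your closure-set normalisation leans on reduction validities that the paper explicitly states do not suffice to eliminate announcements; the ``irreducible'' cross-coalition formulas you must treat semantically are precisely the ones on which the filtration breaks.

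For contrast, the paper's proof avoids quotienting an existing model altogether. It adapts the Halpern--Moses tableau construction: announcement subformulas are first treated as fresh atoms to obtain a finite epistemic tableau model $M$, and then --- crucially --- one considers finitely many variants $M'$ of $M$ obtained by duplicating states, which are $m$-bisimilar but \emph{not} $\pm$-bisimilar to $M$. This duplication step exists precisely to account for the $\N$-versus-$\N'$ phenomenon that defeats your filtration. The dynamics are then handled by transition relations $R^{\pm}_{(s,\psi,A)}$ between tableaux, encoding how announcements update the labelling, much as in the decidability proof for action model logic. If you want to salvage a small-model route, you would need a selection or unravelling argument that preserves the exact agent-wise edge types required by $\pm$-bisimulation, rather than a quotient; as written, the proposal does not establish the finite model property it relies on.
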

In Theorem~\ref{sat} the decidability of GLAL is proved similarly to
the decidability of PAL. The difference is merely in the amount of
transitions due to announcements. Whereas in PAL announcements
$[\phi]$ are functional, in GLAL announcements $[\phi]^+_A$ and
$[\phi]^-_A$ are branching (i.e., for each $A \subseteq Ag$, and for
global and local announcements, we may need different transitions). In
that sense the decidability proof is more akin to that of action model
logic, wherein non-deterministic actions also cause branching.

\section{Conclusions} \label{conc}

We introduced a unified account to formalise both global and local
announcements in GLAL, a strictly more expressive extension of public
announcement logic.  The key feature of the semantics of GLAL is that
the refinement of the indistinguishability relations is defined in the
same way for public and private announcement, i.e., as the restriction
of the equivalence classes to the worlds satisfying (or not) the given
announcement. The crucial difference between global and local
announcements is the domain of application of such updates: worlds
accessible in one step
or all worlds epistemically reachable, respectively.
%
In Example~\ref{ex1} and \ref{ex:bisim} we showed how these formal
notions capture our intuitions about global and local announcements.

In future work we envisage several extensions.  Firstly, since
differently from PAL, announcements are not necessarily broadcast to
all agents (so that only one such announcement can be broadcast at
each given time), we can envisage global and local announcements
communicated simultaneously and introduce formulas $([\phi]_A \circ
[\phi']_B) \psi$ with the intended meaning that if $\phi$ is
(truthfully) announced to coalition $A$ and {\em simultaneously}
$\phi'$ is announced to coalition $B$, then $\psi$ holds.  This is of
interest to model synchronous communication. Particular care is to be
taken in defining the semantics of operator $[\phi]_A \circ [\phi']_B$
whenever the intersection of coalitions $A$ and $B$ is non-empty.

Secondly, as the receiver of the announcement can be a subset $A
\subseteq Ag$ of the set of all agents, we can think that the announcement
originates from some other coalition $B$ and introduce GLAL operators
$[\phi]_{B,A}$ indexed to both $A$ and $B$.
Such an extension would provide a finer-grained analysis of scenarios
such as communication and security protocols.


\subsection*{Acknowledgements} We thank the TARK reviewers for the comments. The research described in this paper was 
supported by the French ANR JCJC Project SVeDaS
(ANR-16-CE40-0021) and the ERC project EPS 313360. Hans van Ditmarsch is also affiliated to IMSc, Chennai.

\bibliographystyle{eptcs}
\bibliography{tarkbib}  

\begin{thebibliography}{10}
\providecommand{\bibitemdeclare}[2]{}
\providecommand{\surnamestart}{}
\providecommand{\surnameend}{}
\providecommand{\urlprefix}{Available at }
\providecommand{\url}[1]{\texttt{#1}}
\providecommand{\href}[2]{\texttt{#2}}
\providecommand{\urlalt}[2]{\href{#1}{#2}}
\providecommand{\doi}[1]{doi:\urlalt{http://dx.doi.org/#1}{#1}}
\providecommand{\bibinfo}[2]{#2}

\bibitemdeclare{inproceedings}{arecesetal:2012}
\bibitem{arecesetal:2012}
\bibinfo{author}{C.~\surnamestart Areces\surnameend},
  \bibinfo{author}{R.~\surnamestart Fervari\surnameend} \&
  \bibinfo{author}{G.~\surnamestart Hoffmann\surnameend}
  (\bibinfo{year}{2012}): \emph{\bibinfo{title}{Moving Arrows and Four Model
  Checking Results}}.
\newblock In: {\sl \bibinfo{booktitle}{Proc.\ of 19th WoLLIC}},
  \bibinfo{publisher}{Springer}, pp. \bibinfo{pages}{142--153},
  \doi{10.1007/978-3-642-32621-9_11}.
\newblock \bibinfo{note}{LNCS 7456}.

\bibitemdeclare{article}{aucheretal:2009}
\bibitem{aucheretal:2009}
\bibinfo{author}{G.~\surnamestart Aucher\surnameend},
  \bibinfo{author}{P.~\surnamestart Balbiani\surnameend},
  \bibinfo{author}{L.~\surnamestart {Fari{\~n}as del Cerro}\surnameend} \&
  \bibinfo{author}{A.~\surnamestart Herzig\surnameend} (\bibinfo{year}{2009}):
  \emph{\bibinfo{title}{Global and Local Graph Modifiers}}.
\newblock {\sl \bibinfo{journal}{{ENTCS}}} \bibinfo{volume}{231}, pp.
  \bibinfo{pages}{293--307}, \doi{10.1016/j.entcs.2009.02.042}.

\bibitemdeclare{inproceedings}{baltagetal.hpi:2008}
\bibitem{baltagetal.hpi:2008}
\bibinfo{author}{A.~\surnamestart Baltag\surnameend},
  \bibinfo{author}{H.~\surnamestart van Ditmarsch\surnameend} \&
  \bibinfo{author}{L.S. \surnamestart Moss\surnameend} (\bibinfo{year}{2008}):
  \emph{\bibinfo{title}{Epistemic Logic and Information Update}}.
\newblock In: {\sl \bibinfo{booktitle}{Handbook on the Philosophy of
  Information}}, \bibinfo{publisher}{Elsevier}, pp. \bibinfo{pages}{361--456},
  \doi{10.1016/B978-0-444-51726-5.50015-7}.

\bibitemdeclare{inproceedings}{baltagetal:1998}
\bibitem{baltagetal:1998}
\bibinfo{author}{A.~\surnamestart Baltag\surnameend},
  \bibinfo{author}{L.~\surnamestart Moss\surnameend} \&
  \bibinfo{author}{S.~\surnamestart Solecki\surnameend} (\bibinfo{year}{1998}):
  \emph{\bibinfo{title}{The Logic of Public Announcements, Common Knowledge,
  and Private Suspicions}}.
\newblock In: {\sl \bibinfo{booktitle}{Proc.~of 7th TARK}},
  \bibinfo{publisher}{Morgan K.}, pp. \bibinfo{pages}{43--56},
  \doi{10.1007/978-3-319-20451-2_38}.

\bibitemdeclare{incollection}{jfak.sabotage:2005}
\bibitem{jfak.sabotage:2005}
\bibinfo{author}{J.~\surnamestart van Benthem\surnameend}
  (\bibinfo{year}{2005}): \emph{\bibinfo{title}{An Essay on Sabotage and
  Obstruction}}.
\newblock In: {\sl \bibinfo{booktitle}{Mechanizing Mathematical Reasoning}},
  \bibinfo{series}{LNCS 2605}, \bibinfo{publisher}{Springer}, pp.
  \bibinfo{pages}{268--276}, \doi{10.1007/978-3-540-32254-2_16}.

\bibitemdeclare{book}{jfak.book:2011}
\bibitem{jfak.book:2011}
\bibinfo{author}{J.~\surnamestart van Benthem\surnameend}
  (\bibinfo{year}{2011}): \emph{\bibinfo{title}{Logical Dynamics of Information
  and Interaction}}.
\newblock \bibinfo{publisher}{Cambridge University Press},
  \doi{10.1017/cbo9780511974533.002}.

\bibitemdeclare{book}{BlackburndRV01}
\bibitem{BlackburndRV01}
\bibinfo{author}{P.~\surnamestart Blackburn\surnameend},
  \bibinfo{author}{M.~\surnamestart de~Rijke\surnameend} \&
  \bibinfo{author}{Y.~\surnamestart Venema\surnameend} (\bibinfo{year}{2001}):
  \emph{\bibinfo{title}{Modal Logic}}.
\newblock \bibinfo{publisher}{Cambridge University Press},
  \doi{10.1017/cbo9781107050884}.

\bibitemdeclare{article}{BolanderDHLPS16}
\bibitem{BolanderDHLPS16}
\bibinfo{author}{T.~\surnamestart Bolander\surnameend},
  \bibinfo{author}{H.~\surnamestart Ditmarsch\surnameend},
  \bibinfo{author}{A.~\surnamestart Herzig\surnameend},
  \bibinfo{author}{E.~\surnamestart Lorini\surnameend},
  \bibinfo{author}{P.~\surnamestart Pardo\surnameend} \&
  \bibinfo{author}{F.~\surnamestart Schwarzentruber\surnameend}
  (\bibinfo{year}{2016}): \emph{\bibinfo{title}{Announcements to Attentive
  Agents}}.
\newblock {\sl \bibinfo{journal}{J. of Logic, Lang. and Inf.}}
  \bibinfo{volume}{25}(\bibinfo{number}{1}), pp. \bibinfo{pages}{1--35},
  \doi{10.1007/s10849-015-9234-3}.

\bibitemdeclare{misc}{hvd.thesis:2000}
\bibitem{hvd.thesis:2000}
\bibinfo{author}{H.~\surnamestart van Ditmarsch\surnameend}
  (\bibinfo{year}{2000}): \emph{\bibinfo{title}{Knowledge games}}.
\newblock \bibinfo{note}{ILLC Dissertation Series DS-2000-06}.

\bibitemdeclare{book}{hvdetal.handbook:2015}
\bibitem{hvdetal.handbook:2015}
\bibinfo{editor}{H.~\surnamestart van Ditmarsch\surnameend},
  \bibinfo{editor}{J.~\surnamestart Halpern\surnameend},
  \bibinfo{editor}{W.~\surnamestart van~der Hoek\surnameend} \&
  \bibinfo{editor}{B.~\surnamestart Kooi\surnameend}, editors
  (\bibinfo{year}{2015}): \emph{\bibinfo{title}{Handbook of epistemic logic}}.
\newblock \bibinfo{publisher}{College Publications}.

\bibitemdeclare{inproceedings}{pabe:2004}
\bibitem{pabe:2004}
\bibinfo{author}{H.~\surnamestart van Ditmarsch\surnameend},
  \bibinfo{author}{W.~\surnamestart van~der Hoek\surnameend} \&
  \bibinfo{author}{B.~\surnamestart Kooi\surnameend} (\bibinfo{year}{2004}):
  \emph{\bibinfo{title}{Public Announcements and Belief Expansion}}.
\newblock In: {\sl \bibinfo{booktitle}{Advances in Modal Logic 5}}, pp.
  \bibinfo{pages}{62--73}.

\bibitemdeclare{book}{DitmarschK15}
\bibitem{DitmarschK15}
\bibinfo{author}{Hans \surnamestart van Ditmarsch\surnameend} \&
  \bibinfo{author}{Barteld \surnamestart Kooi\surnameend}
  (\bibinfo{year}{2015}): \emph{\bibinfo{title}{One Hundred Prisoners and a
  Light Bulb}}.
\newblock \bibinfo{publisher}{Copernicus}, \doi{10.1007/978-3-319-16694-0_9}.

\bibitemdeclare{book}{DitmarschvdHK07}
\bibitem{DitmarschvdHK07}
\bibinfo{author}{H.P. \surnamestart van Ditmarsch\surnameend},
  \bibinfo{author}{W.~\surnamestart van~der Hoek\surnameend} \&
  \bibinfo{author}{B.P. \surnamestart Kooi\surnameend} (\bibinfo{year}{2007}):
  \emph{\bibinfo{title}{Dynamic Epistemic Logic}}.
\newblock {\sl \bibinfo{series}{Synthese Library}} \bibinfo{volume}{337},
  \bibinfo{publisher}{Springer}, \doi{10.1007/978-1-4020-5839-4}.

\bibitemdeclare{book}{fhmv:rak}
\bibitem{fhmv:rak}
\bibinfo{author}{R.~\surnamestart Fagin\surnameend}, \bibinfo{author}{J.Y.
  \surnamestart Halpern\surnameend}, \bibinfo{author}{Y.~\surnamestart
  Moses\surnameend} \& \bibinfo{author}{M.Y. \surnamestart Vardi\surnameend}
  (\bibinfo{year}{1995}): \emph{\bibinfo{title}{{Reasoning About Knowledge}}}.
\newblock \bibinfo{publisher}{The MIT Press}.

\bibitemdeclare{phdthesis}{fervari:2014}
\bibitem{fervari:2014}
\bibinfo{author}{R.~\surnamestart Fervari\surnameend} (\bibinfo{year}{2014}):
  \emph{\bibinfo{title}{Relation-Changing Modal Logics}}.
\newblock Ph.D. thesis, \bibinfo{school}{Universidad Nacional de C\'ordoba,
  Argentina}.

\bibitemdeclare{article}{gerbrandyetal:1997}
\bibitem{gerbrandyetal:1997}
\bibinfo{author}{J.D. \surnamestart Gerbrandy\surnameend} \&
  \bibinfo{author}{W.~\surnamestart Groeneveld\surnameend}
  (\bibinfo{year}{1997}): \emph{\bibinfo{title}{Reasoning about Information
  Change}}.
\newblock {\sl \bibinfo{journal}{{JoLLI}}} \bibinfo{volume}{6}, pp.
  \bibinfo{pages}{147--169}, \doi{10.1023/A:1008222603071}.

\bibitemdeclare{article}{halpernetal:1992}
\bibitem{halpernetal:1992}
\bibinfo{author}{J.Y. \surnamestart Halpern\surnameend} \&
  \bibinfo{author}{Y.~\surnamestart Moses\surnameend} (\bibinfo{year}{1992}):
  \emph{\bibinfo{title}{A guide to completeness and complexity for modal logics
  of knowledge and belief}}.
\newblock {\sl \bibinfo{journal}{{J. of AI}}} \bibinfo{volume}{54}, pp.
  \bibinfo{pages}{319--379}, \doi{10.1016/0004-3702(92)90049-4}.

\bibitemdeclare{incollection}{hoekpauly:06a}
\bibitem{hoekpauly:06a}
\bibinfo{author}{W.~\surnamestart van~der Hoek\surnameend} \&
  \bibinfo{author}{M.~\surnamestart Pauly\surnameend} (\bibinfo{year}{2006}):
  \emph{\bibinfo{title}{Modal Logic for Games and Information}}.
\newblock In: {\sl \bibinfo{booktitle}{The Handbook of Modal Logic}},
  \bibinfo{publisher}{Elsevier}, pp. \bibinfo{pages}{1180--1152},
  \doi{10.1016/s1570-2464(07)80023-1}.

\bibitemdeclare{article}{kooietal:2011}
\bibitem{kooietal:2011}
\bibinfo{author}{B.~\surnamestart Kooi\surnameend} \&
  \bibinfo{author}{B.~\surnamestart Renne\surnameend} (\bibinfo{year}{2011}):
  \emph{\bibinfo{title}{Arrow Update Logic}}.
\newblock {\sl \bibinfo{journal}{{R}eview of {S}ymbolic {L}ogic}}
  \bibinfo{volume}{4(4)}, pp. \bibinfo{pages}{536--559},
  \doi{10.1017/S1755020311000189}.

\bibitemdeclare{book}{HoekM95}
\bibitem{HoekM95}
\bibinfo{author}{J.Ch. \surnamestart Meyer\surnameend} \&
  \bibinfo{author}{W.~\surnamestart van~der Hoek\surnameend}
  (\bibinfo{year}{1995}): \emph{\bibinfo{title}{Epistemic Logic for AI and
  Computer Science}}.
\newblock \bibinfo{publisher}{Cambridge University Press},
  \doi{10.1017/CBO9780511569852}.

\bibitemdeclare{inproceedings}{Plaza89}
\bibitem{Plaza89}
\bibinfo{author}{J.A. \surnamestart Plaza\surnameend} (\bibinfo{year}{1989}):
  \emph{\bibinfo{title}{Logics of Public Communications}}.
\newblock In: {\sl \bibinfo{booktitle}{Proc.\ of the 4th ISMIS}},
  \bibinfo{publisher}{Oak Ridge National Laboratory}, pp.
  \bibinfo{pages}{201--216}, \doi{10.1007/s11229-007-9168-7}.

\bibitemdeclare{article}{DBLP:journals/jancl/Roelofsen07}
\bibitem{DBLP:journals/jancl/Roelofsen07}
\bibinfo{author}{F.~\surnamestart Roelofsen\surnameend} (\bibinfo{year}{2007}):
  \emph{\bibinfo{title}{Distributed knowledge}}.
\newblock {\sl \bibinfo{journal}{{JANCL}}}
  \bibinfo{volume}{17}(\bibinfo{number}{2}), pp. \bibinfo{pages}{255--273},
  \doi{10.3166/jancl.17.255-273}.

\bibitemdeclare{article}{DBLP:journals/synthese/WangA13}
\bibitem{DBLP:journals/synthese/WangA13}
\bibinfo{author}{Y.N. \surnamestart W{\'{a}}ng\surnameend} \&
  \bibinfo{author}{T.~\surnamestart {\AA}gotnes\surnameend}
  (\bibinfo{year}{2013}): \emph{\bibinfo{title}{Public announcement logic with
  distributed knowledge: expressivity, completeness and complexity}}.
\newblock {\sl \bibinfo{journal}{Synthese}} \bibinfo{volume}{190}, pp.
  \bibinfo{pages}{135--162}, \doi{10.1007/s11229-012-0243-3}.

\end{thebibliography}

\appendix \label{app}

\section*{Appendix with Proofs}

\begin{proof} {\bf (Proposition~\ref{prop10})}
 The only non-trivial case is for $\psi = [\phi] \phi'$. In
 particular, we show that for every $w \in W$, refinement
 $\M^+_{(w, \phi, Ag)}$ satisfies the same formulas in PAL as
 refinement $\M_{\phi}$.
 The key remark here is that worlds that are not reachable from $w$
 via relation $R^C_{Ag}$ do not account for the satisfaction of PAL
 formulas at $w$. Specifically, in refinement $\M^+_{(w, \phi, Ag)}$,
 any state $w'$ is reachable from $w$ via $R^{+C}_{Ag}$ iff $w'$ is
 reachable from $w$ in $\M_{\phi}$. Also, in both refinements the
 indistinguishability relations and assignments are restricted to
 $\denot{\phi}_{\M}$. As a result, the two models satisfy the same
 $[\phi]$-formulas at $w$.
\end{proof}

\begin{proof} {\bf (Proposition~\ref{lem1})}
We prove~(\ref{ref1}) for a propositional formula $\phi$.
Suppose that
$(\M, w) \models \phi$ but $(\M^-_{(w, \phi, A)}, w) \not \models
E_A \phi$ to obtain a contradiction, that is, $(\M^-_{(w, \phi, A)}, w')
\not \models \phi$ for some $a \in A$ and $w' \in R^-_a(w)$.  In particular, this
means that $(\M, w') \not \models \phi$, as $\phi$ is propositional.
Hence, $w' \neq w$ (as $\phi$ is true in $w$) and $w' \in
R_a(w) \supseteq R^-_a(w)$.
But then $w' \notin R_a(w) \cap [\![ \phi ]\!]$,
against the hypothesis that $R^-_a(w,w')$. Therefore, it is the
case that $(\M^-_{(w,\phi,A)}, w) \models E_A \phi$.
The proof for~(\ref{ref2}) follows a similar line. 
\end{proof}

\begin{proof} {\bf (Proposition~\ref{prop20})}
%
%
We prove (\ref{form1}). Suppose that $(\M,
w) \models \left[\phi \right]^-_A \left[\phi' \right]^-_A \psi$, that
is, if $(\M, w) \models \phi$ and $(\M^-_{(w, \phi, A)},
w) \models \phi'$, then $((\M^-_{(w, \phi, A)})^-_{(w, \phi', A)} ,
w) \models \psi$.  We have to show that this is equivalent to $(\M,
w) \models \left[\phi \land \left[\phi \right]^-_A \phi' \right]^-_A \psi$,
that is, if $(\M,
w) \models \phi$ and $(\M^-_{(w, \phi, A)},
w) \models  \phi'$, then
$(\M^-_{(w, \phi \land \left[\phi \right]^-_A \psi', A)},
w) \models \psi$.
Hence, it is enough to prove that $((\M^-_{(w, \phi, A)})^-_{(w, \phi', A)} ,
w) \models \psi$ iff $(\M^-_{(w, \phi \land \left[\phi \right]^-_A \psi', A)},
w) \models \psi$. In particular, refinements $(\M^-_{(w, \phi, A)})^-_{(w, \phi', A)}$ and $\M^-_{(w, \phi \land \left[\phi \right]^-_A \psi', A)}$ are identical.
To see this we remark that in refinement $(\M^-_{(w, \phi, A)})^-_{(w, \phi', A)}$, for
every $a \in A$,
{\footnotesize
\[
R^-_a(v) =  
  \begin{cases} R_a(v) \cap [\![\phi]\!]_{\M} \cap [\![\phi']\!]_{\M_{(w, \phi, A)}}\\
\hspace{1.5cm} \mbox{if } v \in R_a(w) \cap [\![\phi]\!]_{\M}  \cap [\![\phi']\!]_{\M_{(w, \phi, A)}}\\
 R_a(v) \cap [\![\phi]\!]_{\M}  \cap [\![\neg \phi']\!]_{\M_{(w, \phi, A)}}\\
\hspace{1.5cm} \mbox{if } v \in R_a(w) \cap [\![\phi]\!]_{\M}  \cap [\![\neg \phi']\!]_{\M_{(w, \phi, A)}}\\
R_a(v) \cap [\![\neg \phi]\!]_{\M}  \cap [\![\phi']\!]_{\M_{(w, \phi, A)}}\\
\hspace{1.5cm}  \mbox{if } v \in R_a(w) \cap [\![\neg \phi]\!]_{\M}  \cap [\![\phi']\!]_{\M_{(w, \phi, A)}}\\
R_a(v) \cap [\![\neg \phi]\!]_{\M}  \cap [\![\neg \phi']\!]_{\M_{(w, \phi, A)}}\\
\hspace{1.5cm} \mbox{if } v \in R_a(w) \cap [\![\neg \phi]\!]_{\M}  \cap [\![ \neg \phi']\!]_{\M_{(w, \phi, A)}}\\
  R_a(v)\\
\hspace{1.5cm} \mbox{otherwise} 
  \end{cases}
  \]
}
which is tantamount to the following in model $\M^-_{(w, \phi \land \left[\phi \right]^-_A \phi', A)}$:
{\scriptsize
\[
R^-_a(v) =  
  \begin{cases} R_a(v) \cap [\![\phi \land [\phi]_A \phi' ]\!]_{\M} & \mbox{if } v \in R_a(w) \cap [\![\phi \land [\phi]_A \phi' ]\!]_{\M}\\
 R_a(v) \cap [\![\neg (\phi \land [\phi]_A \phi')]\!]_{\M} & \mbox{if } v \in R_a(w) \cap [\![\neg (\phi \land [\phi]_A \phi')]\!]_{\M}\\
  R_a(v) & \mbox{otherwise} 
  \end{cases}
  \]
}
Hence, the two models are identical and (\ref{form1}) holds.
\end{proof}

\begin{proof} {\bf (Theorem \ref{thm:bisimpreservation})}
The proof is by induction on the structure of formula $\psi$.  The
base case for atomic propositions and the inductive cases for
propositional connectives and epistemic modalities are immediate, as
$\pm$-bisimulations
are 
$m$-bisimulations in particular.
So suppose that $\psi = [\theta]^-_A \theta'$, and assume the
induction hypothesis for both $\theta$ and $\theta'$. We can guarantee
this by assuming that the complexity of $\theta$
is lower than that of $\theta'$. This, in turn, can be achieved for
instance by allowing only update formulas of the form
$[\theta]^-_A((\theta' \land \theta) \lor (\theta' \land \neg \theta))$.

Next, we consider the local refinements ${\M}^-_{(s,\theta,A)}$ and
 ${\M}'^-_{(s',\theta,A)}$, and prove {\bf Atoms}, {\bf Forth} and
{\bf Reach} from left to right, the other direction
being symmetric.
We assume a bisimulation ${\mathbf B} \subseteq W \times W'$ and
show that ${\mathbf B}$ is a bisimulation between
${\M}^-_{(s,\theta,A)}$ and ${\M'}^-_{(s',\theta,A)}$ as well, such that ${\mathbf B}(s,s')$ holds. 
Firstly, for every $w \in W^- = W$ and $w' \in W'^- = W'$, if
${\mathbf B}(w,w')$ then $w \in V^-(p)$ iff $w \in V(p)$ iff $w' \in
V'(p)$ iff $w' \in V'^-(p)$, which shows that {\bf Atoms} is
satisfied.

As regards {\bf Forth}, let us write $R^E_A(s)$ for $\bigcup_{a \in A}
R_a(s)$.  Then, we distinguish the case whether $w \in R^E_A(s)$ or $w
\notin R^E_A(s)$. First, if ${\mathbf B}(w,w')$ and $w \not \in
R^E_A(s)$, then for every set $B \subseteq \Agents$ of agents and $v
\in W$, $R^{-}_B(w,v)$ iff $R_B(w,v)$.  Moreover, since ${\mathbf
  B}(s,s')$, $w' \not \in R'^E_A(s')$ by {\bf Reach}, and therefore
for some $v' \in W'$, ${\mathbf B}(v,v')$ and $R'_B(w',v')$ implies
$R'^-_B(w',v')$.

Now consider $w \in R^E_A(s)$ (and therefore $w' \in R^E_A(s')$ by
{\bf Reach}). We distinguish two cases: \myi $(\M,w) \models \theta$ iff
$(M,v) \models \theta$ (i.e., $(\M,w)$ and $(\M,v)$ agree on $\theta$). By
the definition of refinement, we then have that $R^-_B(w,v)$ iff
$R_B(w,v)$. Since ${\mathbf B}$ is a bisimulation, we then also have
$R'_B(w',v')$ for some $v' \in W'$. Moreover, by induction hypothesis,
$(\M',w')$ and $(\M', v')$ agree on $\theta$ as well. So we have
$R'^-_B(w',v')$.

Now assume \myii $(\M,w) \models \theta$ iff
$(\M,v) \not \models \theta$ (i.e., $(\M,w)$ and $(\M,v)$ disagree on
$\theta$). Suppose that $R^-_B(w,v)$. By the definition of refinement,
we have $R_B(w,v)$, where $B$ is a set for which $C \subseteq
B \subseteq (A \cup C)$, that is, all agents in $A \cap B$ have
learned that $w$ and $v$ are different. But then we also have
$R'_B(w',v')$ for some $v' \in W'$, and again by the definition of
refinement, $R'^-_B(w',v')$.

Finally, condition {\bf Reach} holds for ${\M}^-_{(s,\theta,A)}$ and
${\M'}^-_{(s',\theta,A)}$, as if ${\mathbf B}(w,w')$, ${\mathbf
  B}(v,v')$, and $w \in R^E_A(s)$, then for every $b \in Ag$,
$R_b(w,v)$ implies that either $b \notin A$ or $w$ and $v$ agree on
the interpretation of $\theta$ in $\M$. By induction hypothesis $w'$
and $v'$ agree as well on $\theta$, and in particular $R'_b(w',v')$.

For the global refinement ${\M}^+_{(s,\psi,A)}$,
the proof is similar, but instead of equivalence class $R^E_A(w)$, we
consider $R^C_A(w)$.
\begin{figure}[ht]
\begin{center}
\includegraphics[width=.45\textwidth]{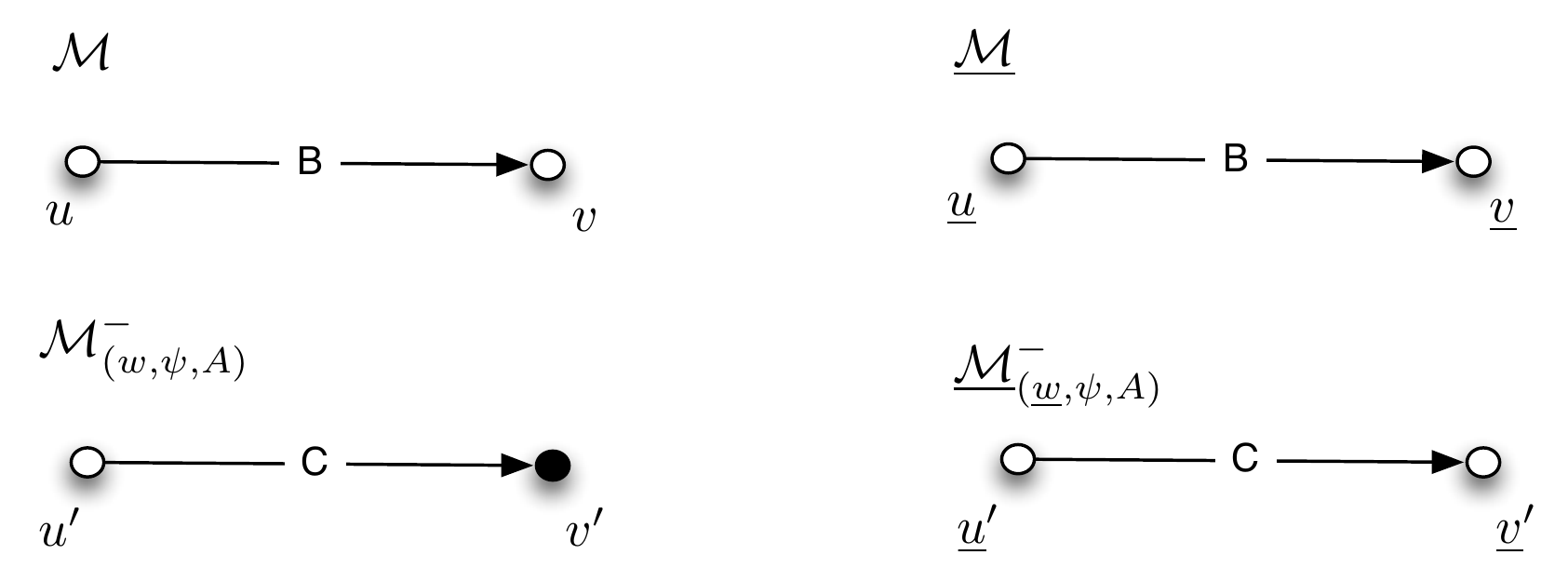} 
\end{center}
\caption{Scheme for the proof of Theorem~\ref{thm:bisimpreservation}}\label{fig:fourmodelsb}
\end{figure}


To conclude the proof, for the case of announcement formulas $\psi =
[\theta]^-_A \theta'$, $(\M,w) \models \psi$ iff
$(\M,w) \models \theta$ implies $(\M^-_{(w,\theta,A)}, w) \models
\theta'$).  If $(\M,w) \not\models \theta$, then also $(\M',w')
\not\models\theta$, so $(\M',w') \models \psi$. Then,
suppose that $(\M,w) \models \theta$.  We then have
$(\M^-_{(w,\theta,A)}, w) \models \theta'$ and, since
$(\M^-_{(w,\theta,A)}, w) \leftrightharpoons
(\M'^-_{(w',\theta,A)},w')$ by the proof above, by
the induction hypothesis we obtain $(\M'^-_{(w',\theta,A)}, w')
\models \theta'$, and therefore $(\M', w') \models \varphi$. The case
for $\psi = [\theta]^+_A \theta'$ is similar.
\end{proof}

\begin{proof} {\bf (Theorem~\ref{mcheck})}
 The PTIME-hardness of model checking GLAL follows from the
 PTIME-hardness of model checking epistemic logic and the fact that
 the latter is subsumed by the former \cite{HoekM95}.

As regards membership in PTIME, epistemic formulas can be checked in
polynomial time. Further, model refinements $\M^+_{(w, \psi, A)}$ and
$\M^-_{(w, \psi, A)}$, for world $w$, formula $\psi$, and coalition
$A$, can be computed in polynomial time in the size $|\M|$ of the
model given as $|W| + \sum_{a \in Ag}|R_a|$. By combining the two
procedures we can obtain an algorithm that runs in polynomial time.
\end{proof}

\begin{proofsk} {\bf (Theorem~\ref{sat})}
To prove the decidability of GLAL we describe a decision procedure
inspired by \cite{halpernetal:1992}.  There, given an epistemic
formula $\phi \in \L_{el}$, the authors provide an algorithm to build
a (purely epistemic) model $M = (S, \pi, \K_1, \ldots, \K_n)$ in
exponential time, in which \myi each state $s \in S$ is a maximal
propositional tableau on the set $Sub^+_C(\phi)$ of subformulas of
$\phi$\footnote{In \cite{halpernetal:1992} $Sub(\phi)$ is the set of
  all subsets of $\phi$; $Sub_C(\phi)$ extends $Sub(\phi)$ with
  formulas $E(\psi \land C\psi)$ and $\psi \land C\psi$ for every
  subformula $\psi \in Sub(\phi)$; and $Sub^+_C(\phi)$ includes
  $Sub_C(\phi)$ and the negation of formulas in $Sub_C(\phi)$.}; \myii
$p \in \pi(s)$ iff $p \in s$; \myiii $\K_a(s,t)$ iff $\{ K_a \psi \mid
K_a \psi \in s\} = \{ K_a \psi \mid K_a \psi \in t\}$. In particular,
if $\phi$ if satisfiable (in some epistemic model), then $M$ is a
model for $\phi$.

Now, based on the above, we describe a procedure to build a (possibly
empty) model for a formula $\phi \in \L_{glal}$. Firstly, by
interpreting each announcement subformula in $\psi$ as a new
proposition, $\phi$ can be seen as a purely epistemic formula, so that
the procedure in \cite{halpernetal:1992} applies. Then, we obtain an
epistemic model $M$.  Further, starting with $M$ we build a GLAL model
for $\phi$. 
Specifically, given $M$ and copies $s'_1, \ldots, s'_n$ of states
$s_1, \ldots, s_n$ in $M$, consider a new model $M' = (S', \pi, \K'_1,
\ldots, \K'_n)$ such that \myi $S' = S \cup \{ s'_1, \ldots, s'_n \}$;
and \myii for $s, t \in S$, $\K'_a(s,t)$ iff $\K_a(s,t)$, while for $t
\in S$, $\K'_a(s',t)$ and $\K'_a(t,s')$ only if $\K_a(s,t)$ and
$\K_a(t,s)$.  Notice that all such $M'$ are $m$-bisimilar to the
original $M$. However, they are not necessarily $\pm$-bisimilar to
$M$, and therefore we need to consider all such $M'$ up to
$\pm$-bisimulation when checking for satisfiability.  In particular,
these are in finite number as $S$ is finite.

Finally, we define a tableau for GLAL as a tuple $T = \langle \{ M_i \}_{i
  \in I}, \{ R^-_{(s, \phi, A)} \}, \{ R^+_{(s, \psi, A)} \} \rangle$, for $s \in S, \psi \in Sub^+_C(\phi), A
  \subseteq Ag$, such that
\begin{itemize}
\item all $M_i = \langle S, \pi, \K_1, \ldots, \K_n \rangle$ are
  epistemic models, all defined on a single
  set $S$ of states, as described above;
\item each $R^-_{(s, \psi, A)}$ is a relation on epistemic models
  such that $R^-_{(s, \psi, A)}(M, M')$ and $\{ \psi, [ \psi
  ]_A^- \chi \} \subseteq \pi(s)$ imply
\begin{itemize}
\item $\chi \in \pi'(s)$;
\item for every $t \in \K_a(s)$, for every $b \notin A$, 
$\{K_b \theta \mid K_b \theta \in \pi'(t) \} = \{K_b \theta \mid K_b \theta \in \pi(t) \}$, while for every $a \in A$,
$\{K_a \theta \mid K_a \theta \in \pi'(t) \} = \{K_a \theta \mid K_a \theta \in \pi(t) \} \cup \{ K_a \theta \mid K_a \theta \in Sub^{+}_C (K_a \psi) \}$ iff $\psi \in \pi(t)$ and $\{K_a \theta \mid K_a \theta \in \pi'(t) \} = \{K_a \theta \mid K_a \theta \in \pi(t) \} \cup \{ K_a \theta \mid K_a \theta \in Sub^{+}_C (K_a \neg \psi) \}$
 iff $\neg \psi \in \pi(t)$;
\item for every $t \notin \K_a(s)$, $\pi'(t) = \pi(t)$.
\end{itemize}
\item each $R^+_{(s, \psi, A)}$ is a relation on epistemic structures
  such that $R^+_{(s, \psi, A)}(M, M')$ and $\{ \psi, [ \psi ]_A^+
  \chi \} \subseteq \pi(s)$ imply
\begin{itemize}
\item $\chi \in \pi'(s)$;
\item for every $t \in (\bigcup_{a \in A} \K_a)^*(s)$, for every $b \notin A$, 
$\{K_b \theta \mid K_b \theta \in \pi'(t) \} = \{K_b \theta \mid K_b \theta \in \pi(t) \}$, while for every $a \in A$,
$\{K_a \theta \mid K_a \theta \in \pi'(t) \} = \{K_a \theta \mid K_a \theta \in \pi(t) \} \cup \{ K_a \theta \mid K_a \theta \in Sub^{+}_C (C_A \psi) \}$ iff $\psi \in \pi(t)$ and $\{K_a \theta \mid K_a \theta \in \pi'(t) \} = \{K_a \theta \mid K_a \theta \in \pi(t) \} \cup \{ K_a \theta \mid K_a \theta \in Sub^{+}_C (C_A \neg \psi) \}$
 iff $\neg \psi \in \pi(t)$;
\item for every $t \notin (\bigcup_{a \in A} \K_a)^*(s)$, $\pi'(t) =
  \pi(t)$.
\end{itemize}
\end {itemize}

Clearly, if any tableau $T$ for a formula $\phi$ as above is
non-empty, then $\phi$ is satisfiable.  In particular, we can prove
that for every $\psi \in Sub^+_C(\phi)$, $(M, s) \models \psi$ iff
$\psi \in \pi(s)$, for $s \in M$.

Furthermore, the soundness of the procedure follows by the soundness
of the three steps separately. Firstly, $\phi$ is satisfiable as an
epistemic formula iff then there exists at least one epistemic model
$M$ defined as above. Given $M$, there are finitely many $M'$ non
$\pm$-bisimilar to $M$. Then, we attempt to build a GLAL tableau $T$ starting in each of these $M'$.
In particular, $T$ is not empty iff $\phi$ is satisfiable.
\end{proofsk}

\end{document}